\documentclass[a4paper,10pt]{article}
\pdfoutput=1 % if your are submitting a pdflatex (i.e. if you have
             % images in pdf, png or jpg format)

\usepackage{jheppub} % for details on the use of the package, please
                     % see the JHEP-author-manual
\usepackage{amssymb,amstext,amsthm,amsfonts,mathrsfs,amsbsy,amsmath,array,multirow}
\usepackage[all]{xy} \CompileMatrices
\usepackage[T1]{fontenc} % if needed
\usepackage{slashed} %% for the Dirac operator

\newtheorem{thm}{Theorem}[section]

\newtheorem{prop}[thm]{Proposition}
\newtheorem{lemma}[thm]{Lemma}

\newtheorem{definition}[thm]{Definition}

\newtheorem{remark}[thm]{Remark}
\newtheorem{ep}[thm]{Example}

\newcommand{\cCi}{C^\infty}

\newcommand{\la}{\langle}
\newcommand{\ra}{\rangle}

\newcommand{\RR}{{\mathbb R}}

\renewcommand{\(}{\left(}
\renewcommand{\)}{\right)}
\newcommand{\vol}{\mathrm{vol}}

 %{\coloneqq} %{:=}

\newcommand{\surj}{\to\kern-1.8ex\to}

 %\newcommand{\cH}{\mathrm{Ham}(X)}

%\rightline{\small IPhT-...}

\title{\boldmath Self-dual generalized metrics for pure $\mathcal{N}=1$ six-dimensional Supergravity}

%% %simple case: 2 authors, same institution
%% \author{A. Uthor}
%% \author{and A. Nother Author}
%% \affiliation{Institution,\\Address, Country}

% more complex case: 4 authors, 3 institutions, 2 footnotes
\author{M. Garcia-Fernandez and C. S. Shahbazi}

% The "\note" macro will give a warning: "Ignoring empty anchor..."
% you can safely ignore it.

\affiliation{Instituto de Ciencias Matem\'aticas, Madrid.}

\affiliation{Institut de Physique Th\'eorique, CEA-Saclay.}

% e-mail addresses: one for each author, in the same order as the authors

\emailAdd{mario.garcia@icmat.es}

\emailAdd{carlos.shabazi-alonso@cea.fr}

% more complex case: 4 authors, 3 institutions, 2 footnotes

\abstract{\vspace{0.1cm}\\ 

%%%%%%%%%%%%%%%%%%%%%%%%%%%%%%%%%%%%%%%%%%%%%%%%%%%%%%%%%%%%%%%%%%%%%%%%%%%%%%%%
%%%%%%%%%%%%%%%%%%%%%%%%%%%%%%%%%%%%%%%%%%%%%%%%%%%%%%%%%%%%%%%%%%%%%%%%%%%%%%%%

We geometrize six-dimensional pure $\mathcal{N}=1$ Supergravity by means of an exact Courant algebroid, whose \u{S}evera class is defined through the Supergravity three-form $H$, equipped with a generalized metric and a compatible, torsion-free, generalized connection. The Supergravity equations of motion follow from the vanishing of the Ricci curvature of the generalized metric, satisfying a natural notion of self-duality. This way, we interpret the solutions of six-dimensional pure, $\mathcal{N}=1$, Supergravity as generalized self-dual gravitational monopoles. For the D1-D5 black string solution, we explore the possibility of controlling space-time singularities by using $B$-field transformations.
}

%%%%%%%%%%%%%%%%%%%%%%%%%%%%%%%%%%%%%%%%%%%%%%%%%%%%%%%%%%%%%%%%%%%%%%%%%%%%%%%%
%%%%%%%%%%%%%%%%%%%%%%%%%%%%%%%%%%%%%%%%%%%%%%%%%%%%%%%%%%%%%%%%%%%%%%%%%%%%%%%%

%\keywords{keyword one, keyword two}
\arxivnumber{1505.03088}

%%%%%%%%%%%%%%%%%%%%%%%%%%%%%%%%%%%%%%%%%%%%%%%%%%%%%%%%%%%%%%%%%%%%%%%%%%%%%%%%
%%%%%%%%%%%%%%%%%%%%%%%%%%%%%%%%%%%%%%%%%%%%%%%%%%%%%%%%%%%%%%%%%%%%%%%%%%%%%%%%

\begin{document}
\maketitle
\flushbottom

%%%%%%%%%%%%%%%%%%%%%%%%%%%%%%%%%%%%%%%%%%%%%%%%%%%%%%%%%%%%%%%%%%%%%%%%%%%%%%%%%%%%%%%%%%%%%%%%%%%%%%%%%%%%%%%%%%%%%%%%%%%%
%%%%%%%%%%%%%%%%%%%%%%%%%%%%%%%%%%%%%%%%%%%%%%%%%%%%%%%%%%%%%%%%%%%%%%%%%%%%%%%%%%%%%%%%%%%%%%%%%%%%%%%%%%%%%%%%%%%%%%%%%%%%
%%%%%%%%%%%%%%%%%%%%%%%%%%%%%%%%%%%%%%%%%%%%%%%%%%%%%%%%%%%%%%%%%%%%%%%%%%%%%%%%%%%%%%%%%%%%%%%%%%%%%%%%%%%%%%%%%%%%%%%%%%%%
%%%%%%%%%%%%%%%%%%%%%%%%%%%%%%%%%%%%%%%%%%%%%%%%%%%%%%%%%%%%%%%%%%%%%%%%%%%%%%%%%%%%%%%%%%%%%%%%%%%%%%%%%%%%%%%%%%%%%%%%%%%%

\section{Introduction}
\label{sec:introduction}

%%%%%%%%%%%%%%%%%%%%%%%%%%%%%%%%%%%%%%%%%%%%%%%%%%%%%%%%%%%%%%%%%%%%%%%%%%%%%%%%%%%%%%%%%%%%%%%%%%%%%%%%%%%%%%%%%%%%%%%%%%%%
%%%%%%%%%%%%%%%%%%%%%%%%%%%%%%%%%%%%%%%%%%%%%%%%%%%%%%%%%%%%%%%%%%%%%%%%%%%%%%%%%%%%%%%%%%%%%%%%%%%%%%%%%%%%%%%%%%%%%%%%%%%%
%%%%%%%%%%%%%%%%%%%%%%%%%%%%%%%%%%%%%%%%%%%%%%%%%%%%%%%%%%%%%%%%%%%%%%%%%%%%%%%%%%%%%%%%%%%%%%%%%%%%%%%%%%%%%%%%%%%%%%%%%%%%
%%%%%%%%%%%%%%%%%%%%%%%%%%%%%%%%%%%%%%%%%%%%%%%%%%%%%%%%%%%%%%%%%%%%%%%%%%%%%%%%%%%%%%%%%%%%%%%%%%%%%%%%%%%%%%%%%%%%%%%%%%%%

It is a remarkable fact that some Supergravity theories can be geometrized using different variations of Hitchin's theory of generalized geometry \cite{2002math......9099H}, in the sense that at least their bosonic sector  can be written in terms of natural geometric structures, namely, \emph{generalized metrics}, compatible \emph{generalized connections} and \emph{generalized curvature} quantitities attached to them. The geometrization of a Supergravity theory provide us with a very convenient and natural set-up where to study its structure, its supersymmetric solutions \cite{Coimbra:2014uxa,Coimbra:2015nha}, and in particular its moduli space \cite{delaOssa:2014cia,Anderson:2014xha,2015arXiv150307562G}, being also a unifying and elegant framework interesting from the mathematical point of view \cite{2004math......1221G,2004math......1221G}. So far, the \emph{geometrization program} has been carried out, at least partially, for the higher dimensional Supergravities: for example reference \cite{Coimbra:2011nw} deal with Type-II Supergravities, references \cite{2014CMaPh.332...89G,Coimbra:2014qaa,2015arXiv150307562G} treat Heterotic Supergravity and some of its $\alpha^{\prime}$-corrections and references \cite{Pacheco:2008ps,Coimbra:2011ky} deal with eleven-dimensional Supergravity, where the geometrization has only been achieved for a particular class of backgrounds. In the context of double field theory, the geometrization of supergravity theories was pioneered in \cite{Siegel:1993bj,Siegel:1993th} and later completed in \cite{Jeon:2010rw,Hohm:2010pp,Hohm:2010xe}. For Type-II theories, a geometrization was achieved in \cite{Hohm:2011zr,Hohm:2011dv}, while Heterotic theories and their $\alpha'$-corrections have been described in \cite{Hohm:2011ex,Hohm:2013jaa,Hohm:2014eba}.

A common feature of the geometrization of higher dimensional supergravities is that the bosonic fields can be encoded in a generalized metric $G$ (in absence of RR fields in Type-II), with compatible torsion-free generalized connection, and the equations of motion are described in terms of the \emph{generalized vacuum Einstein equations}
$$
\mathrm{GRic} = 0, \qquad \mathrm{GS} = 0,
$$
where GRic and GS are the natural analogues, in the realm of generalized geometry, of the Ricci tensor and the scalar curvature in Einstein gravity. The main goal of this work is to show that other interesting lower-dimensional Supergravities can be %written in the lenguage 
studied using generalized geometry. In this letter %we are going to write 
we describe six-dimensional, pure, $\mathcal{N}=1$ Supergravity %in terms of a generalized metric and a canonical, metric compatible, torsion-free, generalized connection 
by means of an exact Courant algebroid, whose \u{S}evera class is given by the three-form $H$ of the Supergravity theory. We will see %also 
how the self-dual condition on $H$ can be naturally reinterpreted in the language of generalized geometry as a self-duality condition for a Lorentz generalized metric in six dimensions, reminiscent of the standard self-duality condition for %satisfied by self-dual 
gravitational instantons in four dimensions. With the right notion of self-duality at hand, the Supergravity equations of motion follow from the generalized vacuum Einstein equations. %associated to the generalized Levi-Civita connection of a self-dual generalized metric. 
On the other hand, imposing the generalized vacuum Einstein equations without assuming self-duality for the generalized metric, provides a class of bosonic theories with the same bosonic content as six-dimensional pure, $\mathcal{N}=1$ Supergravity, such that the three-form $H$ is not self-dual but a section of an appropriate Lagrangian subbundle of the $SO(10,10)$-bundle of three-forms. The interpretation of this class of theories is unclear, but it would be interesting to check whether they can be supersymmetrized in some sense, since they have %anyway 
the right number of bosonic degrees of freedom to be supersymmetric. 

To explore the role played by singular solutions in generalized geometry, %in the last Section
 we consider the generalized metric corresponding to a specific supersymmetric, singular, solution of six-dimensional pure $\mathcal{N}=1$ supergravity: the self-dual string. By constructing an exact Courant algebroid over the singularity of the black string, we will argue that there exists a smooth, degenerate, transition from the black string solution to the solution corresponding to a naked singularity. This is achieved via the construction of a family of generalized Lorentz metrics, degenerating smoothly along the singularity. We speculate that this new point of view may lead to a better understanding and control of space-time singularities in Supergravity.

Six dimensional, pure, $\mathcal{N}=1$ supergravity is a very interesting theory to study in generalized geometry. Firstly, it is arguably one of the simplest supergravity theories that can be geometrized, and thus it is the perfect arena where to fully understand the geometrization procedure and explore its different possibilities. Through compactification or dimensional reduction of six-dimensional, pure, $\mathcal{N}=1$ Supergravity, one can obtain interesting four-dimensional and three-dimensional Supergravities, so despite its simplicity it is related to interesting theories in lower dimensions. In addition, the solutions of six-dimensional pure, $\mathcal{N}=1$ Supergravity \cite{Gutowski:2003rg,Bena:2011dd} play a very important role in relation to the microstate-geometries/fuzzball proposal \cite{Mathur:2005zp} for the description of the black hole entropies in String Theory, see \cite{Lunin:2001fv,Bena:2005va,Bena:2007kg,Lunin:2002iz,Bena:2015bea,deLange:2015gca} for more details and further references. In particular, understanding their moduli space is of paramount importance in this program and we indeed think that generalized geometry is the right framework where to study this problem.

From a mathematical perspective, this work provides a new interesting geometric structure --- self-dual, Lorentzian, generalized metrics in six dimensions ---, reminiscent of the self-duality condition for %satisfied by self-dual 
gravitational instantons in four dimensions.
%Self-dual metrics 
%which has never been considered before. %in generalized geometry. , and this provides the physical motivation to do so, 
%Aside from the mathematical motivation of naturally extending the concept of self-dual metrics to generalized geometry, six dimensional, pure, $\mathcal{N}=1$ Supergravity provides the physical motivation to do so. 
Given an exact Courant algebroid $E$ on a six dimensional oriented manifold $M$, a Lorentzian generalized metric is given by a reduction of the $SO(6,6)$-bundle of frames of $E$ to
$$
SO(1,5)\times SO(1,5)
$$
Under natural transversality conditions with the cotangent bundle $0 \to T^*M \to E$, a generalized metric has an associated three-form curvature $H$ and provides an identification of the $SO(6,6)$-spinor bundle
$$
S_{6,6} \cong \Lambda^* T^*M,
$$
which is endowed with a Hodge star operator
$$
\star \colon S_{6,6} \to S_{6,6}.
$$
Regarding the three-form curvature as an $SO(6,6)$-spinor $H \in \Lambda^* T^*M$, a \emph{self-dual generalized Lorentzian metric} is defined by the condition
$$
\star H = H.
$$
Self-dual generalized metrics can be understood as weak analogues of self-dual gravitational instantons. Being a natural notion in generalized geometry in six-dimensions, self-duality %for a generalized Lorentz metric 
produces an interesting coupling for a classical Lorentz metric $g$ with a closed three-form $H$, which seems to be intimately related with paracomplex geometry. %It is worth mentioning that 
On the one hand, N. Hitchin has previously studied the geometry of self-dual three-forms in Lorentz signature in his seminal paper \cite{Hit0} showing that, generically, such a three-form defines a paracomplex structure on open sets in the manifold $M$. On the other hand, when the  aforementioned transversality of the generalized metric is lost, it relates to a paracomplex structure on the manifold, as we show in Example \ref{example}. It would be interesting to explore further this relation. %the interrelation of generalized self-duality with paracomplex geometry.

The outline of this letter goes as follows. In Section \ref{sec:courantalgebroids} we give a gentle introduction to exact Courant algebroids and generalized metrics. In Section \ref{sec:connections} we consider torsion-free, metric compatible, generalized connections and the associated curvature quantities. Section \ref{sec:GSupergravity} contains the main results of this letter: we apply the formalism of Section \ref{sec:courantalgebroids} and Section \ref{sec:connections} to six-dimensional pure, $\mathcal{N}=1$ Supergravity obtaining its description in the language of generalized geometry and identifying its solutions as generalized self-dual gravitational monopoles. In section \ref{sec:susysolutions} we explicitly obtain the generalized metric corresponding to a simple, singular, supersymmetric solution of six-dimensional pure, $\mathcal{N}=1$ Supergravity, in order to illustrate how the singularities appear in the generalized geometry approach. Finally, appendix \ref{app:linearalgebra} contains some mathematical background.

%A word of caution: 
%We should clarify the following about the nomenclature used in this letter: %that 
%by gravitational monopoles we mean simply %in this letter 
%Ricci-flat (possibly singular) Lorentzian manifolds. Sometimes in the literature the assumption of having finite energy is included in its definition. The reader that prefers to include this extra-assumption can think about our remarks concerning gravitational monopoles to apply only to those solutions of six-dimensional, $\mathcal{N}=1$ pure Supergravity which have finite energy.

In this letter, by gravitational monopole we mean simply a Ricci-flat (possibly singular) Lorentzian manifold. In the literature the assumption of having finite energy is often included as part of the definition, but we will not assume this extra condition.

%%%%%%%%%%%%%%%%%%%%%%%%%%%%%%%%%%%%%%%%%%%%%%%%%%%%%%%%%%%%%%%%%%%%%%%%%%%%%%%%%%%%%%%%%%%%%%%%%%%%%%%%%%%%%%%%%%%%%%%%%%%%
%%%%%%%%%%%%%%%%%%%%%%%%%%%%%%%%%%%%%%%%%%%%%%%%%%%%%%%%%%%%%%%%%%%%%%%%%%%%%%%%%%%%%%%%%%%%%%%%%%%%%%%%%%%%%%%%%%%%%%%%%%%%
%%%%%%%%%%%%%%%%%%%%%%%%%%%%%%%%%%%%%%%%%%%%%%%%%%%%%%%%%%%%%%%%%%%%%%%%%%%%%%%%%%%%%%%%%%%%%%%%%%%%%%%%%%%%%%%%%%%%%%%%%%%%
%%%%%%%%%%%%%%%%%%%%%%%%%%%%%%%%%%%%%%%%%%%%%%%%%%%%%%%%%%%%%%%%%%%%%%%%%%%%%%%%%%%%%%%%%%%%%%%%%%%%%%%%%%%%%%%%%%%%%%%%%%%%

\section{Courant algebroids and admissible generalized metrics}
\label{sec:courantalgebroids}

%%%%%%%%%%%%%%%%%%%%%%%%%%%%%%%%%%%%%%%%%%%%%%%%%%%%%%%%%%%%%%%%%%%%%%%%%%%%%%%%%%%%%%%%%%%%%%%%%%%%%%%%%%%%%%%%%%%%%%%%%%%%
%%%%%%%%%%%%%%%%%%%%%%%%%%%%%%%%%%%%%%%%%%%%%%%%%%%%%%%%%%%%%%%%%%%%%%%%%%%%%%%%%%%%%%%%%%%%%%%%%%%%%%%%%%%%%%%%%%%%%%%%%%%%
%%%%%%%%%%%%%%%%%%%%%%%%%%%%%%%%%%%%%%%%%%%%%%%%%%%%%%%%%%%%%%%%%%%%%%%%%%%%%%%%%%%%%%%%%%%%%%%%%%%%%%%%%%%%%%%%%%%%%%%%%%%%
%%%%%%%%%%%%%%%%%%%%%%%%%%%%%%%%%%%%%%%%%%%%%%%%%%%%%%%%%%%%%%%%%%%%%%%%%%%%%%%%%%%%%%%%%%%%%%%%%%%%%%%%%%%%%%%%%%%%%%%%%%%%

In this section we introduce, closely following references \cite{2007arXiv0710.2719G,2014CMaPh.332...89G,2015arXiv150307562G}, the mathematical background for the geometrization of six-dimensional $\mathcal{N}=1$ pure Supergravity, namely exact Courant algebroids endowed with the appropriate generalized connections. %We will end-up by obtaining the equations of motion of six-dimensional $\mathcal{N}=1$ pure Supergravity from the vanishing of the Ricci and scalar curvature of a self-dual generalized metric.

%%%%%%%%%%%%%%%%%%%%%%%%%%%%%%%%%%%%%%%%%%%%%%%%%%%%%%%%%%%%%%%%%%%%%%%%%%%%%%%%%%%%%%%%%%%%%%%%%%%%%%%%%%%%%%%%%%%%%%%%%%%%
%%%%%%%%%%%%%%%%%%%%%%%%%%%%%%%%%%%%%%%%%%%%%%%%%%%%%%%%%%%%%%%%%%%%%%%%%%%%%%%%%%%%%%%%%%%%%%%%%%%%%%%%%%%%%%%%%%%%%%%%%%%%

\subsection{Courant algebroids}

%%%%%%%%%%%%%%%%%%%%%%%%%%%%%%%%%%%%%%%%%%%%%%%%%%%%%%%%%%%%%%%%%%%%%%%%%%%%%%%%%%%%%%%%%%%%%%%%%%%%%%%%%%%%%%%%%%%%%%%%%%%%
%%%%%%%%%%%%%%%%%%%%%%%%%%%%%%%%%%%%%%%%%%%%%%%%%%%%%%%%%%%%%%%%%%%%%%%%%%%%%%%%%%%%%%%%%%%%%%%%%%%%%%%%%%%%%%%%%%%%%%%%%%%%

Intuitively speaking, a Courant algebroid is an enhancement of the tangent bundle of a smooth manifold $M$ by means of a  vector bundle $E\to M$ equipped with a non-degenerate symmetric bilinear form and a bracket, satisfying a relaxed version of the Lie bracket axioms. % such that the condition of anitsymmetricity, Leibniz rule and Jacobi identity, are relaxed in a particular way. 
Here $M$ is a $d$-dimensional manifold. A particular example of Courant algebroid, see example \ref{ep:standardCourant}, was first introduced by T. Courant in reference \cite{DiracManifolds} in order to obtain a unified description of pre-symplectic and Poisson structures in Dirac's theory of constrained mechanical systems. Courant algebroids were then abstractly defined for the first time by Liu, Weinstein and Xu in reference \cite{1995dg.ga.....8013L}, and by know there are several equivalent definitions of Courant algebroids available in the literature. Here we will use the definition given in reference \cite{LettersSevera} by \u{S}evera:

\begin{definition}[\cite{LettersSevera}]
\label{def:Courant}
A Courant algebroid $(E,\la\cdot,\cdot\ra,[\cdot,\cdot],\pi)$ over a manifold $M$ consists of a vector bundle $E\to M$ together with a nondegenerate symmetric bilinear form $\la\cdot,\cdot\ra$ on $E$, a (Dorfman) bracket $[\cdot,\cdot]$ on the sections $\Gamma(E)$, and a bundle map $\pi:E\to TM$ such that the following properties are satisfied, for $e_{1},e_{2},e_{3}\in \Gamma(E)$ and $\phi\in \cCi(M)$:
  \begin{itemize}
  \item[(C1):] $[e_{1},[e_{2},e_{3}]] = [[e_{1},e_{2}],e_{3}] + [e_{2},[e_{1},e_{3}]]$,
  \item[(C2):] $\pi([e_{1},e_{2}])=[\pi(e_{1}),\pi(e_{2})]$,
  \item[(C3):] $[e_{1},\phi e_{2}] = \pi(e_{1})(\phi) e_{2} + \phi[e_{1},e_{2}]$,
  \item[(C4):] $\pi(e_{1})\la e_{2}, e_{3} \ra = \la [e_{1},e_{2}], e_{3} \ra + \la e_{2}, [e_{1},e_{3}]
    \ra$,
  \item[(C5):] $[e_{1},e_{2}]+[e_{2},e_{1}]=\pi^* d\la e_{1},e_{2}\ra$.
  \end{itemize}
%% \pi^* d OR D ?
\end{definition} 

\noindent
The map $\pi\colon E\to TM$ is usually called the \emph{anchor map}. Notice that given an Courant algebroid $E$, we can always identify $E^{\ast}\simeq E$ by using the bilinear $\la\cdot,\cdot\ra$ and hence we obtain a map 

\begin{equation}
\pi^{\ast}\colon T^{\ast}M\to E\, ,
\end{equation}
dual to $\pi\colon E\to TM$. This is the map appearing in item $C5$ of definition \ref{def:Courant}. The bracket in Definition \ref{def:Courant} goes under the name of Dorfman bracket $[\cdot,\cdot]$. It satisfies the Jacobi identity, namely item C1, but fails to be antisymmetric, and relates to the skew-symmetrized \emph{Courant bracket} $[\![\cdot,\cdot ]\!]$, by
\begin{equation}
[\cdot,\cdot] = [\![\cdot,\cdot ]\!] + \pi^* d\la\cdot,\cdot\ra\, .
\end{equation}
The symbol $d$ denotes %of course 
the de Rahm differential. The definition in the original reference \cite{1995dg.ga.....8013L}, differs from definition \ref{def:Courant} in the bracket used (see also \cite{1999math.....10078R}). %After the abstract definitions, 
An explicit example of Courant algebroid is now in order.

\begin{ep}
\label{ep:standardCourant}
The simplest example of Courant algebroid is the {\bf standard Courant algebroid} $E=TM\oplus T^{\ast}M$ over a manifold $M$, equipped with the {\bf standard Dorfman bracket}:

\begin{equation}
\label{eq:scourantbracket}
[v_{1}+\alpha_{1}, v_{2}+\alpha_{2}] = [v_{1}, v_{2}]_{L} + \mathcal{L}_{v_{1}}\alpha_{2} - \iota_{v_{2}} d\alpha_{1}\, , \qquad v_{1}, v_{2}\in\mathfrak{X}(M)\, , \qquad \alpha_{1}, \alpha_{2}\in \Omega^{1}(M)\, ,
\end{equation}

\noindent
and the {\bf standard symmetric pairing}:

\begin{equation}\label{eq:pairing}
\la v_{1} + \alpha_{1}, v_{2}+\alpha_{2} \ra = \frac{1}{2}(\iota_{v_{1}}\alpha_{2} + \iota_{v_{2}}\alpha_{1})\, ,
\end{equation}

\noindent
where $[\cdot, \cdot]_{L}$ denotes the standard Lie bracket on $\mathfrak{X}(M)$. The anchor map $\pi\colon E\to TM$ is simply the obvious projection on the tangent bundle. 
\end{ep}

\noindent
It was noticed in reference \cite{LettersSevera} that one can twist the standard Dorfman bracket by using a closed three-form $H$ as follows:

\begin{equation}
\label{eq:Htwistedbracket}
[v_{1}+\alpha_{1}, v_{2}+\alpha_{2}]_{H} = [v_{1}, v_{2}]_{L} + \mathcal{L}_{v_{1}}\alpha_{2} - \iota_{v_{2}} d\alpha_{1} + \iota_{v_{1}}\iota_{v_{2}}H\, ,
\end{equation}

\noindent
and still obtain a Courant algebroid in $TM\oplus T^{\ast}M$, with the same %other ingredients of the definition untouched. 
anchor and symmetric product. This way, it is obtained the so-called $H$-twisted standard Courant algebroid. The standard Courant algebroid is, as we will see in a moment, the prototype of an exact Courant algebroid.

\begin{definition}\cite{LettersSevera}
\label{def:exactCourant}
A Courant algebroid $(E,\la\cdot,\cdot\ra,[\cdot,\cdot],\pi)$ over $M$ is exact if and only if the following sequence of vector bundles

\begin{equation}
\label{eq:sequenceexactE}
\xymatrix{
0 \ar[r] & T^{\ast}M \ar[r]^{\pi^{\ast}} & E \ar[r]^\pi & TM \ar[r] & 0
}\, ,
\end{equation}

\noindent
is exact.
\end{definition}

\begin{definition}\cite{LettersSevera}
\label{def:splittingCourant}
A splitting of an exact Courant algebroid $(E,\la\cdot,\cdot\ra,[\cdot,\cdot],\pi)$ over a manifold $M$ is a map of vector bundles $s\colon TM \to E$ such that

\begin{enumerate}

\item $\pi\circ s = \mathbb{I}_{TM}$,

\item $\la s(v_{1}), s(v_{2})\ra = 0$ for all $v_{1}, v_{2} \in \mathfrak{X}(M)$.

\end{enumerate}

\end{definition}

\noindent
Definition \ref{def:splittingCourant} means that a splitting of an exact Courant algebroid is an isotropic splitting of the sequence of vector bundles \ref{eq:sequenceexactE}. Notice that $\pi^{\ast}\left(T^{\ast}M\right)\cap s\left(TM\right) = \left\{ 0\right\}$. The exactness condition in the definition \ref{def:exactCourant} forces $\pi^{\ast}(T^{\ast}M)$ to be isotropic in $E$ and thus the symmetric pairing $\la\cdot,\cdot\ra$ is bound to be of split signature. If $s$ is an splitting, then for every two-form $b\in\Omega^{2}(M)$ we can construct another splitting $s'$ as follows

\begin{equation}
s'(v) = s(v) + \frac{1}{2}\pi^{\ast}b(v)\, ,
\end{equation}

\noindent
and in fact every two splittings of a Courant algebroid differ by a two-form on $M$ in this way \cite{Bressler:2002ur}. In other words, the space of splittings of a Courant algebroid is an affine space modeled on $\Omega^{2}(M)$. Given an exact Courant algebroid $(E,\la\cdot,\cdot\ra,[\cdot,\cdot],\pi)$, any isotropic splitting $s\colon TM\to E$, has an associated \emph{three-form curvature}:

\begin{equation}
H(v_{1},v_{2},v_{3}) = \la [\![ s(v_{1}), s(v_{2})]\!], s(v_{3})\ra\, , \qquad v_{1}, v_{2}, v_{3} \in \mathfrak{X}(M)\, .
\end{equation}

\noindent
It can be proven that given another splitting $s'$ then the corresponding three-form: 

\begin{equation}
H'(v_{1},v_{2},v_{3}) = \la [\![ s'(v_{1}), s'(v_{2})]\!], s'(v_{3})\ra\, , \qquad v_{1}, v_{2}, v_{3} \in \mathfrak{X}(M)\, .
\end{equation}

\noindent
is related to $H$ as follows:

\begin{equation}
H' = H + db\, ,
\end{equation}

\noindent
where $s' - s = b\in\Omega^{2}(M)$. As observed first by \u{S}evera \cite{LettersSevera}, given an exact Courant algebroid $(E,\la\cdot,\cdot\ra,[\cdot,\cdot],\pi)$, the class $[H]\in H^{3}(M)$ does not depend on the splitting. It is called the \u{S}evera class of the exact Courant algebroid and its importance steams from the fact that it classifies exact Courant algebroids up to isomorphism. In other words, two exact Courant algebroids are isomorphic if and only if they have the same \u{S}evera class. 

Notice that for an exact Courant algebroid $(E,\la\cdot,\cdot\ra,[\cdot,\cdot],\pi)$, any isotropic splitting $s \colon TM \to E$ determines an isomorphism
$$
s + \frac{1}{2}\pi^* \colon TM \oplus T^*M \to E,
$$
and the trasported bracket and pairing are given by \ref{eq:Htwistedbracket} and \ref{eq:pairing}, respectively. Therefore, exact Courant algebroids over a manifold $M$ can be always modeled by the corresponding generalized tangent bundle $TM\oplus T^{\ast}M$ equipped with the standard symmetric pairing and the $H$-twisted Courant bracket \ref{eq:Htwistedbracket}.

%%%%%%%%%%%%%%%%%%%%%%%%%%%%%%%%%%%%%%%%%%%%%%%%%%%%%%%%%%%%%%%%%%%%%%%%%%%%%%%%%%%%%%%%%%%%%%%%%%%%%%%%%%%%%%%%%%%%%%%%%%%%
%%%%%%%%%%%%%%%%%%%%%%%%%%%%%%%%%%%%%%%%%%%%%%%%%%%%%%%%%%%%%%%%%%%%%%%%%%%%%%%%%%%%%%%%%%%%%%%%%%%%%%%%%%%%%%%%%%%%%%%%%%%%

\subsection{Admissible generalized metrics}

%%%%%%%%%%%%%%%%%%%%%%%%%%%%%%%%%%%%%%%%%%%%%%%%%%%%%%%%%%%%%%%%%%%%%%%%%%%%%%%%%%%%%%%%%%%%%%%%%%%%%%%%%%%%%%%%%%%%%%%%%%%%
%%%%%%%%%%%%%%%%%%%%%%%%%%%%%%%%%%%%%%%%%%%%%%%%%%%%%%%%%%%%%%%%%%%%%%%%%%%%%%%%%%%%%%%%%%%%%%%%%%%%%%%%%%%%%%%%%%%%%%%%%%%%

As mentioned earlier, the symmetric pairing $\la \cdot, \cdot\ra$ of an exact Courant algebroid is of signature $(d,d)$. In particular it implies a reduction of the bundle of frames of $E$ to $O(d,d)$. A generalized metric of signature $(p,q)$ is a further reduction of the bundle of frames of $E$ from $O(d,d)$ to:

\begin{equation}
O(p,q)\times O(q,p)\subset O(d,d)\, .
\end{equation}

\noindent
More geometrically, a generalized metric can alternatively be defined as a subbundle

\begin{equation}
V_{+}\subset E\, ,
\end{equation}

\noindent
such that the restriction of the symmetric pairing $\la \cdot, \cdot\ra$ to $V_{+}$ is a non-degenerate metric of signature $(p,q)$. Notice that for the case of an exact Courant algebroid with $p=d$, $q=0$, we recover the definition of generalized metric given in \cite{2004math......1221G}, which induces a Riemannian metric in the corresponding vector bundle $V_{+}$. We will denote by $V_{-}$ the orthogonal complement of $V_{+}$ with respect to the symmetric pairing. A generalized metric is equivalent to the existence of a vector bundle isomorphism

\begin{equation}
G\colon E\to E\, ,
\end{equation}

\noindent
whose eigenspaces are $V_{\pm}$ and such that $G^{2}=\mathbb{I}$, $G^{t} = G$. The subbundle $V_{+}$ can be obtained from $G$ as follows:

\begin{equation}
V_{+} = \mathrm{Ker}(G-\mathbb{I})\, .
\end{equation}

\noindent
Now, there is an important difference between the indefinite-signature case $(p,q)$ and the Riemannian case studied in \cite{2004math......1221G}. In the Riemannian case we automatically have that $V_{+}\cap \pi^{\ast}(T^{\ast}M) = \left\{ 0\right\}$ for any generalized metric $V_{+}$. This fact is key in order to relate generalized metrics to a standard metric in $M$ and a splitting of $E$. For Supergravity applications it is important to keep this relation in the indefinite-signature case, and therefore we define:

\begin{definition}
A generalized metric $V_{+}$ on an exact Courant algebroid $(E,\la\cdot,\cdot\ra,[\cdot,\cdot],\pi)$ is said to be {\bf admissible} if

\begin{equation}
\label{eq:admissiblecondition}
V_{+}\cap\pi^{\ast}( T^{\ast}M) = \left\{ 0\right\}\, .
\end{equation}

\end{definition}

\noindent
When equation \eqref{eq:admissiblecondition} holds, the restriction of the anchor map $\pi_+ \colon V_+ \to TM $ is an isomorphism, which precisely induces a metric of signature $(p,q)$ on $M$ as follows:

\begin{equation}
g(v_{1},v_{2}) = \la \pi_+^{-1}(v_{1}), \pi_+^{-1}(v_{2})\ra\, .
\end{equation}

\noindent
For admissible metrics we can prove the following.

\begin{prop}\label{prop:ghfromVandback}
An admissible metric $V_{+}$ on an exact Courant algebroid $(E,\la\cdot,\cdot\ra,[\cdot,\cdot],\pi)$ is equivalent to a pair $(g, s)$, where $g$ is a metric on $TM$ and $s\colon TM \to E$ is an isotropic splitting such that

\begin{equation}\label{eq:V+}
V_{+} = \left\{ s(v) + \frac{1}{2}\pi^{\ast}g(v)\,\, |\,\, v\in TM\right\}
\end{equation}
\end{prop}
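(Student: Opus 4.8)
The plan is to make explicit the two constructions $V_{+}\rightsquigarrow(g,s)$ and $(g,s)\rightsquigarrow V_{+}$ and to verify that they are mutually inverse. Everything is fibrewise linear algebra on $E$, with smoothness of all the maps automatic, so I will argue pointwise. A few elementary facts will be used throughout: the adjunction identity $\la\pi^{\ast}\alpha,e\ra=\alpha(\pi(e))$ for $\alpha\in T^{\ast}M$ and $e\in E$ (this characterises $\pi^{\ast}$ as the transpose of $\pi$ under $\la\cdot,\cdot\ra$, consistently with the normalisation in \eqref{eq:pairing}), the relation $\pi\circ\pi^{\ast}=0$ (exactness of \eqref{eq:sequenceexactE}), and the isotropy of $\pi^{\ast}(T^{\ast}M)$ recalled above. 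Throughout, $g(v)$ denotes the one-form $g(v,\cdot)\in T^{\ast}M$.

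\emph{From $V_{+}$ to $(g,s)$.} Admissibility gives that $\pi_{+}=\pi|_{V_{+}}\colon V_{+}\to TM$ is an isomorphism, so $\pi_{+}^{-1}\colon TM\to E$ is a (generally non-isotropic) splitting of \eqref{eq:sequenceexactE}, and $g(v_{1},v_{2})=\la\pi_{+}^{-1}(v_{1}),\pi_{+}^{-1}(v_{2})\ra$ is the induced metric. I then set $s(v)=\pi_{+}^{-1}(v)-\tfrac12\pi^{\ast}g(v)$. From $\pi\circ\pi^{\ast}=0$ one gets $\pi\circ s=\mathbb{I}_{TM}$; expanding $\la s(v_{1}),s(v_{2})\ra$ and using the adjunction identity for the two cross terms and the isotropy of $\pi^{\ast}(T^{\ast}M)$ for the remaining one yields $g(v_{1},v_{2})-\tfrac12 g(v_{1},v_{2})-\tfrac12 g(v_{1},v_{2})=0$, so $s$ is an isotropic splitting. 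By construction $\pi_{+}^{-1}(v)=s(v)+\tfrac12\pi^{\ast}g(v)$, so that $V_{+}$ is of the form \eqref{eq:V+}.

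\emph{From $(g,s)$ to $V_{+}$, and the inverse property.} Conversely, set $\sigma(v)=s(v)+\tfrac12\pi^{\ast}g(v)$ and $V_{+}=\sigma(TM)$. Applying $\pi$ shows $\sigma$ is an injective bundle map, so $V_{+}$ is a rank-$d$ subbundle; the same expansion as above, now using that $s$ is isotropic, gives $\la\sigma(v_{1}),\sigma(v_{2})\ra=g(v_{1},v_{2})$, so $\la\cdot,\cdot\ra|_{V_{+}}$ is non-degenerate of the same signature $(p,q)$ as $g$. Hence $E=V_{+}\oplus V_{+}^{\perp}$ and $V_{+}$ is a generalized metric, admissible because $\sigma(v)\in\pi^{\ast}(T^{\ast}M)$ forces $v=\pi\sigma(v)=0$ and thus $\sigma(v)=0$. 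The two assignments are inverse: starting from an admissible $V_{+}$, \eqref{eq:V+} holds by construction, so feeding $(g,s)$ back reproduces $V_{+}$; starting from $(g,s)$, the induced metric is $g$ by the pairing computation, and since $\pi_{+}^{-1}(v)$ is the unique point of $V_{+}$ over $v$ it equals $\sigma(v)$, whence the reconstructed splitting is $\sigma(v)-\tfrac12\pi^{\ast}g(v)=s(v)$.

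\emph{Main obstacle.} There is no serious obstruction; the proof is bookkeeping. The one point that must be handled carefully is the interplay of the factor $\tfrac12$ in front of $\pi^{\ast}$ with the normalisation of the adjunction identity $\la\pi^{\ast}\alpha,e\ra=\alpha(\pi(e))$ dictated by \eqref{eq:pairing}: it is precisely this that makes the two cross terms in $\la s(v_{1}),s(v_{2})\ra$ add up to $g(v_{1},v_{2})$ and cancel the diagonal term. It is also worth keeping in mind that admissibility is used twice — once to make $\pi_{+}$ (hence $g$ and $\pi_{+}^{-1}$) exist, and then reappearing on the other side as the transversality $V_{+}\cap\pi^{\ast}(T^{\ast}M)=\{0\}$.
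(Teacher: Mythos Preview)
Your proof is correct and follows essentially the same route as the paper's: define $s(v)=\pi_{+}^{-1}(v)-\tfrac12\pi^{\ast}g(v)$, check isotropy via the same expansion, and read off \eqref{eq:V+}. You are in fact more thorough than the paper, which dispatches the converse direction with ``follows by a direct check using formula \eqref{eq:V+}'', whereas you spell out non-degeneracy, admissibility of the resulting $V_{+}$, and the mutual inverse property.
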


\begin{proof}
Define an splitting $s \colon TM \to E$ by $s(v) = \pi_+^{-1}(v) - \frac{1}{2}\pi^{\ast}g(v)$. Then, by definition of $\pi^*$ and $g$, we have
$$
\langle s(v),s(w)\rangle = g(v,w) - \frac{1}{2}(g(v,w) + g(w,v)) = 0,
$$
and hence $s$ is isotropic. Now, for an arbitrary element $e_+ \in V_+$ we have 
$$
e_+ = \pi_+^{-1}v = s(v) + \frac{1}{2}\pi^{\ast}g(v),
$$
where $v = \pi e_+$, as claimed. The converse follows by a direct check using formula \ref{eq:V+}.
\end{proof}

We note that using the isomorphism $s + \frac{1}{2}\pi^* \colon TM \oplus T^*M \to E$ given by the splitting provided by a generalized metric, the transported subbundle $V_+ \subset TM \oplus T^*M$ has a very simple description
$$
V_{+} = \left\{v + g(v)\,\, |\,\, v\in TM\right\}.
$$
Fixing the isomorphism and moving the splitting by a $b$-field transformation, for $b \in \Omega^2(M)$, we obtain the familiar description of generalized metrics in terms of pairs $(g,b)$
\begin{equation}
e^b V_{+} = \left\{v + b(v) + g(v)\,\, |\,\, v\in TM\right\} \subset E.
\end{equation}
For this generalized metric, the endomorphism $G \colon TM \to TM$ can be explicitely written as
\begin{equation}\label{eq:Gtransverse}
G = e^b \left(
\begin{array}{cc}
0 & g^{-1}\\
g & 0
\end{array}\right) e^{-b} =  \left(
\begin{array}{cc}
-g^{-1}b & g^{-1} \\
g - bg^{-1}b & bg^{-1}
\end{array}\right).
\end{equation}

\noindent
An admissible metric $V_{+}$ on an exact Courant algebroid $(E,\la\cdot,\cdot\ra,[\cdot,\cdot],\pi)$ determines, via the associated splitting, a closed three-form $H$ on $M$ such that the bracket in the splitting $E\simeq TM\oplus T^{\ast}M$ provided by $V_{+}$ is given by \ref{eq:Htwistedbracket}.

\begin{definition}
Given an admissible generalized metric, we define its three-form curvature as the closed three-form $H$ determined by the associated splitting.
\end{definition}

\noindent
We note that regarding $H$ as a curvature for the generalized metric, the closed property
$$
d H  = 0
$$
is naturally interpreted as a Bianchi identity.

%\begin{prop}
%\label{prop:ghfromVandback}
%An admissible metric $V_{+}$ on an exact Courant algebroid $(E,\la\cdot,\cdot\ra,[\cdot,\cdot],\pi)$ determines a metric $g$ and a closed three-form $H$ on $M$ such that the bracket in the splitting $E\simeq TM\oplus T^{\ast}M$ provided by $V_{+}$ is given by:
%\begin{equation}
%[v_{1}+\alpha_{1}, v_{2}+\alpha_{2}]_{H} = [v_{1}, v_{2}]_{L} + \mathcal{L}_{v_{1}}\alpha_{2} - \iota_{v_{2}} d%\alpha_{1} + \iota_{v_{1}}\iota_{v_{2}}H\, ,
%\end{equation}
%\noindent
%where $v_{1}\oplus\alpha_{1}, v_{2}\oplus\alpha_{2}\in TM\oplus T^{\ast}M$. Conversely, any metric $g$ determines an admissible metric given by:
%\begin{equation}
%V_{+} = \left\{ s(v) + \pi^{\ast}g(v)\,\, |\,\,  v\in TM\right\}\, ,
%\end{equation}
%\noindent
%unique up to a transformation by $b\in\Omega^{2}(M)$.
%\end{prop}

%\begin{proof}
%\end{proof}

We finish this section with an example of a generalized Lorentz metric in two-dimensions, which fails to be admissible on a prescribed locus. As we will see, non-admissible metrics have a close relation with paracomplex structures on the manifold.

\begin{ep}\label{example}
Consider the flat Lorentz metric on $\RR^2$
$$
g_{1,1} = d x^u \otimes d x^v + d x^v \otimes d x^u,
$$ 
and the skew-orthogonal paracomplex structure on $\RR^2$, that is $A \colon T\RR^2 \to T\RR^2$ satisfying 
$$
A^2 = \mathbb{I}_{T\RR^2}, \qquad g_{1,1}(A\cdot,\cdot) + g_{1,1}(\cdot,A \cdot) = 0,
$$
given by
$$
A = \left(
\begin{array}{cc}
1 & 0\\
0 & -1
\end{array}\right)
$$
in coordinates $(x^u,x^v)$. For a non-negative function $f \geq 0$ on $\RR^2$, the expression
\begin{equation}
G_f = \left(
\begin{array}{cc}
A & f g_{1,1}^{-1}\\
0 & A^*
\end{array}\right),
\end{equation}
defines a smooth generalized metric of signature $(1,1)$ on $T\RR^2 \oplus T^* \RR^2$ which fails to be admissible in the whole of $\RR^2$. Note that, in the locus $f \neq 0$ the metric $G_f$ is, nevertheless, admissible: we can define $b_f = - \frac{1}{f}g_{1,1}A$ and hence
\begin{equation}\label{eq:GtransverseII}
G_f = e^{b_f} \left(
\begin{array}{cc}
0 & f g_{1,1}^{-1}\\
f^{-1}g_{1,1} & 0
\end{array}\right) e^{-b_f}.
\end{equation}
By the previous expression, the smooth non-admissible metric $G^f$ on $\RR^2$ is B-field related to the classical, possibly singular, metric $\frac{1}{f}g_{1,1}$.
%
%Therefore, we can interpret the smooth generalized metric $G_f$ as a `degeneration' of the classical, possibly singular, metric $\frac{1}{f}g_{1,1}$ to the paracomplex structure $A$ on the locus $f = 0$.
%$$
%t^{-1}g - b_t tg^{-1}b_t = t^{-1}(g - gAg^{-1}gA) = 0,
%$$
\end{ep}

%%%%%%%%%%%%%%%%%%%%%%%%%%%%%%%%%%%%%%%%%%%%%%%%%%%%%%%%%%%%%%%%%%%%%%%%%%%%%%%%%%%%%%%%%%%%%%%%%%%%%%%%%%%%%%%%%%%%%%%%%%%%
%%%%%%%%%%%%%%%%%%%%%%%%%%%%%%%%%%%%%%%%%%%%%%%%%%%%%%%%%%%%%%%%%%%%%%%%%%%%%%%%%%%%%%%%%%%%%%%%%%%%%%%%%%%%%%%%%%%%%%%%%%%%
%%%%%%%%%%%%%%%%%%%%%%%%%%%%%%%%%%%%%%%%%%%%%%%%%%%%%%%%%%%%%%%%%%%%%%%%%%%%%%%%%%%%%%%%%%%%%%%%%%%%%%%%%%%%%%%%%%%%%%%%%%%%
%%%%%%%%%%%%%%%%%%%%%%%%%%%%%%%%%%%%%%%%%%%%%%%%%%%%%%%%%%%%%%%%%%%%%%%%%%%%%%%%%%%%%%%%%%%%%%%%%%%%%%%%%%%%%%%%%%%%%%%%%%%%

\section{Torsion-free generalized connections and its curvature}
\label{sec:connections}

%%%%%%%%%%%%%%%%%%%%%%%%%%%%%%%%%%%%%%%%%%%%%%%%%%%%%%%%%%%%%%%%%%%%%%%%%%%%%%%%%%%%%%%%%%%%%%%%%%%%%%%%%%%%%%%%%%%%%%%%%%%%
%%%%%%%%%%%%%%%%%%%%%%%%%%%%%%%%%%%%%%%%%%%%%%%%%%%%%%%%%%%%%%%%%%%%%%%%%%%%%%%%%%%%%%%%%%%%%%%%%%%%%%%%%%%%%%%%%%%%%%%%%%%%
%%%%%%%%%%%%%%%%%%%%%%%%%%%%%%%%%%%%%%%%%%%%%%%%%%%%%%%%%%%%%%%%%%%%%%%%%%%%%%%%%%%%%%%%%%%%%%%%%%%%%%%%%%%%%%%%%%%%%%%%%%%%
%%%%%%%%%%%%%%%%%%%%%%%%%%%%%%%%%%%%%%%%%%%%%%%%%%%%%%%%%%%%%%%%%%%%%%%%%%%%%%%%%%%%%%%%%%%%%%%%%%%%%%%%%%%%%%%%%%%%%%%%%%%%

In this section we are going to consider generalized connections on Courant algebroids, as introduced by Gualtieri in \cite{2007arXiv0710.2719G}. Given a generalized metric, we will introduce a canonical notion of Levi-Civita connection following \cite{2014CMaPh.332...89G,2015arXiv150307562G} and the relevant curvature quantitites for the analysis of $\mathcal{N} = 1$ Supergravity.

%%%%%%%%%%%%%%%%%%%%%%%%%%%%%%%%%%%%%%%%%%%%%%%%%%%%%%%%%%%%%%%%%%%%%%%%%%%%%%%%%%%%%%%%%%%%%%%%%%%%%%%%%%%%%%%%%%%%%%%%%%%%
%%%%%%%%%%%%%%%%%%%%%%%%%%%%%%%%%%%%%%%%%%%%%%%%%%%%%%%%%%%%%%%%%%%%%%%%%%%%%%%%%%%%%%%%%%%%%%%%%%%%%%%%%%%%%%%%%%%%%%%%%%%%

\subsection{The canonical Levi-Civita connection}

%%%%%%%%%%%%%%%%%%%%%%%%%%%%%%%%%%%%%%%%%%%%%%%%%%%%%%%%%%%%%%%%%%%%%%%%%%%%%%%%%%%%%%%%%%%%%%%%%%%%%%%%%%%%%%%%%%%%%%%%%%%%
%%%%%%%%%%%%%%%%%%%%%%%%%%%%%%%%%%%%%%%%%%%%%%%%%%%%%%%%%%%%%%%%%%%%%%%%%%%%%%%%%%%%%%%%%%%%%%%%%%%%%%%%%%%%%%%%%%%%%%%%%%%%

A generalized connection $D$ on $E$ is a first order differential operator:

\begin{equation}
D \colon \Gamma(E) \to \Gamma(E^* \otimes E)\, ,
\end{equation}

\noindent
which satisfies the Leibniz rule $D_{e_{1}}(fe_{2}) = fD_{e_{1}}e_{2} + \pi(e_{1})(f)e_{2}$, for $e_{1}, e_{2} \in \Gamma(E)$ and $f \in C^\infty(M)$. In the case of exact Courant algebroids, a generalized connection can be seen as a consistent way of taking derivatives with respect to tangent and cotangent directions, for sections of the generalized tangent bundle. 

For Supergravity applications, we will endow the Courant algebroid $(E,\la\cdot,\cdot\ra,[\cdot,\cdot],\pi)$ with a generalized metric $G$ and we will only consider connections compatible with the symmetric pairing on $E$, that is, satisfying:

\begin{equation}
\pi(e_{1})(\langle e_{2},e_{3} \rangle) = \langle D_{e_{1}} e_{2},e_{3} \rangle + \langle e_{2},D_{e_{1}} e_{3} \rangle\, ,
\end{equation}

\noindent
as well as compatible with the generalized metric $G$, namely:

\begin{equation}
D(Ge) = G(De),\, \qquad e\in\Gamma(E)\, .
\end{equation}

\noindent
We will refer to this kind of connections simply as generalized metric connections. Given a standard connection $\nabla\colon \Gamma(E)\to\Gamma(T^{\ast}M\otimes E)$ compatible with the symmetric pairing $\la\cdot,\cdot\ra$ and the generalized metric $G$ we can construct a generalized metric connection simply by setting:

\begin{equation}
D'_{e} s = \nabla_{\pi(e)} s\, , \qquad s,e\in\Gamma(E)\, .
\end{equation}

\noindent
Any other generalized connection can be written in terms of $D'$ by means of an element

\begin{equation}
\chi\in\Gamma\left(E^{\ast}\otimes (\mathfrak{o}(V_{+})\oplus\mathfrak{o}(V_{-}))\right)\, ,
\end{equation}

\noindent
as follows:

\begin{equation}
D = D' +\chi\, ,
\end{equation}

\noindent
Therefore, the space of metric compatible connections on $E$ is an affine space modelled on the vector space $\Gamma\left(E^{\ast}\otimes (\mathfrak{o}(V_{+})\oplus\mathfrak{o}(V_{-}))\right)$. %In particular, any two generalized metric connections $D$ and $D^{\prime}$ differ by an element on $\Gamma\left(E^{\ast}\otimes (\mathfrak{o}(V_{+})\oplus\mathfrak{o}(V_{-}))\right)$.

In reference \cite{2007arXiv0710.2719G} a natural notion of connection on a Courant algebroid, the \emph{Gualtieri--Bismut connection}, was introduced in order to characterize generalized K\"ahler geometry. For Supergravity applications, the relevant generalized connection is in fact an analogue of the Levi-Civita connection --- obtained from the Gualtieri--Bismut connection by \emph{killing the torsion} --- introduced in a more general setup in \cite{2014CMaPh.332...89G,2015arXiv150307562G}. We will review now its construction focusing on the case of exact Courant algebroids, which is the relevant scenario for six-dimensional, pure, $\mathcal{N}=1$ Supergravity.

To any given admissible metric $V_{+}$, we can associate an endomorphism of the vector bundle $E$ such that $C(V_{+}) = V_{-}$ and $C(V_{-}) = V_{+}$, defined by:

\begin{equation}
C = \pi_{|V_{-}}^{-1} \circ \pi \circ \Pi_+ + \pi_{|V_{+}}^{-1} \circ \pi \circ \Pi_-\, ,
\end{equation}

\noindent
where

\begin{equation}
\Pi_{\pm} = \frac{1}{2}(\mathbb{I} \pm G) \colon E \to V_{\pm}\, ,
\end{equation}

\noindent
denote the orthogonal projections. A generalized metric $G$ defines a canonical splitting given by the $G$-orthogonal complement of $V_{+}$. Using the canonical splitting provided by $V_{+}$

\begin{equation}
\label{eq:Esplittr}
E \simeq TM \oplus T^{\ast}M\, ,
\end{equation}

\noindent
we have

\begin{equation}
\label{eq:V+b0}
V_{+} = \{v + g(v) \colon v \in TM \}\, ,
\end{equation}

\noindent
and then we can explicitly write

\begin{equation}
C(v + gv) = v- gv \quad \textrm{and} \quad C(u - gu) = u + gu\, , \qquad u,v \in TM\, .
\end{equation}

\begin{definition}[\cite{2007arXiv0710.2719G}]
The {\bf Gualtieri--Bismut connection} $D^B = D^B(V_{+})$ of $V_{+}$ on $E$ is defined by:

\begin{equation}
\label{eq:bismut}
D^B_{e_{1}}e_{2} = [e_{1-},e_{2+}]_+ + [e_{1+},e_{2-}]_- + [C e_{1-},e_{2-}]_- + [Ce_{1+},e_{2+}]_+\, ,
\end{equation}

\noindent
where $e_{i\pm} = \Pi_\pm e_{i},\, i=1,2$.
\end{definition}

\noindent
The generalized torsion (see \cite{2007arXiv0710.2719G}) of a connection $D$ on $E$ is the totally skew tensor $T_D \in \Lambda^3 E^*$ defined by

\begin{equation}
T_D(a,b,c) = \langle D_{a}b - D_{b}a - [\![ a,b ]\!],c \rangle + \frac{1}{2}\(\langle D_{c} a,b \rangle - \langle D_{c} b,a \rangle\)\, .
\end{equation}

\noindent
A generalized connection with vanishing torsion will be refered as a generalized \emph{torsion-free connection}. By analogy with Hermitian geometry, in \cite{2015arXiv150307562G} the first author jointly with Rubio and Tipler, introduced the following notion of Levi-Civita connection associated to a generalized metric. 

\begin{definition}[\cite{2014CMaPh.332...89G,2015arXiv150307562G}]
The {\bf canonical Levi-Civita connection} of $V_{+}$ is defined by

\begin{equation}
\label{eq:LC}
D^{LC} = D^B - \frac{1}{3} T_{D^B},
\end{equation}

\noindent
where we identify the torsion $T_{D^B}$ with the element $\langle \cdot, \cdot \rangle^{-1} T_{D^B} \in E \otimes \Lambda^2 E^*$.
\end{definition}

\noindent
By construction, $D^{LC}$ is a natural object on $E$, that is, given an automorphism $\varphi \colon E \to E$ and a generalized metric $V_{+}$, we have

\begin{equation}
\varphi_{\ast}(D^{LC}(V_{+})) = D^{LC}(\varphi(V_{+}))\, .
\end{equation}

\noindent
As it was done in \cite{2014CMaPh.332...89G}, we can now modify $D^{LC}$ by elements in $E^{\ast}$, while preserving the torsion-free property: hence, torsion-free, metric connections are not unique. However, $D^{LC}$ is, among all the torsion-free generalized metric connections, a canonical, natural, choice.

Let us fix a generalized metric $V_{+}$ on $E$ and consider the associated splitting \eqref{eq:Esplittr}. In this splitting, the generalized metric takes the form
\begin{equation}
V_{+} = \{X + g(X) \colon X \in TM \}\, , \qquad V_{-} = \{X - g(X) \colon X \in TM \}\, ,
\end{equation}

\noindent
and the induced three-form $H$ is closed

\begin{equation}
dH =0\, .
\end{equation}

\noindent
Since they will naturally appear in a moment, let us define now the following connections on $TM$ with skew torsion, compatible with the metric $g$, given by

\begin{equation}
\label{eq:nabla+3}
\nabla^\pm = \nabla^g \pm \frac{1}{2}g^{-1}H\, , \qquad \nabla^{\pm 1/3} = \nabla^g \pm \frac{1}{6}g^{-1}H\, ,
\end{equation}

\noindent
where $\nabla^g$ denotes the Levi-Civita connection of the metric $g$ on $M$. Setting:

\begin{equation}
\label{eq:abcd}
\begin{split}
a_+ &= v + gv\, ,\\
b_- &= u - gu\, ,\\
c_+ &= w + gw\, ,\\
d_- &= x - gx\, ,
\end{split}
\end{equation}

\noindent
where $v, y, w, x \in TM$, we have

\begin{equation}\label{eq:bismutexp}
\begin{split}
D^B_{a_{+}} c_+ &= 2\Pi_+\(\nabla^{+}_{v} w \)\, ,\\
D^B_{b_{-}} c_+ &= 2\Pi_+\(\nabla^{+}_{u} w\)\, ,\\
D^B_{a_{+}} b_- &= 2\Pi_-\(\nabla^{-}_{v} u\)\, ,\\
D^B_{b_{-}} d_- &= 2\Pi_-\(\nabla^{-}_{u} x\)\, .
\end{split}
\end{equation}

\noindent
In reference \cite{2007arXiv0710.2719G}, Gualtieri gave a  formula for the torsion of the generalized connection $D^B$. We provide an alternative derivation of Gualtieri's formula, using the explicit method in \cite{2014CMaPh.332...89G}. Consider the auxiliar covariant derivative:

\begin{equation}
D' = \nabla^g \oplus \nabla^{g^*}\, ,
\end{equation}

\noindent
on $E$, compatible with $V_{+}$. Define:
 
\begin{equation}
\label{eq:GtransverseIII}
\begin{split}
\chi'_e &= - \langle \cdot,\cdot\rangle^{-1}T_{D'} = \left(
\begin{array}{cc}
0 & 0 \\
i_X H  & 0
\end{array}\right) \in \mathfrak{o}(E)\, ,
\end{split}
\end{equation}

\noindent 
for $e = X + \xi$. Then, we have:

\begin{equation}
\label{eq:bismutchi}
D^B = D' + (\chi')^{+++} + (\chi')^{---} + (\chi')^{+-+} + (\chi')^{-+-}\, .
\end{equation}

\noindent
With the previous formula, a direct calculation using \cite[Lemma 3.5]{2014CMaPh.332...89G} lead us to the following expression for the torsion: the torsion $T_{D^B}$ is the element of $\Lambda^3V_{+}^* \oplus \Lambda^3V_{-}^*$ given by

\begin{equation}
\label{eq:bismuttorsion}
T_{D^B} = \pi_{|V_{+}}^*H + \pi_{|V_{-}}^*H\, . 
\end{equation}

\noindent
More explicitely, taking $a_+,c_+$ as in \eqref{eq:abcd} and $b_+ = Y + gY$, we obtain the formula 

\begin{equation}
T^+_{D^B}(a_+,b_+,c_+)  = H(v,u,w)\, ,
\end{equation}

\noindent
for $T_{D^B} = T_{D^B}^+ + T_{D^B}^-$ the natural decomposition. Similarly, setting $a_- = X - gX$ we have

\begin{equation}
T^{-}_{D^B}(a_-,b_-,d_-) = H(v,u,x)\, .
\end{equation}

\noindent
We write now an explicit formula for the Levi-Civita connection of $V_{+}$. We have:

\begin{equation}
\langle \cdot,\cdot\rangle^{-1} T_{D^B} = 2(\chi')^{+++} + 2(\chi')^{---}\, ,
\end{equation}

\noindent
and therefore:

\begin{equation}
\label{eq:LeviCivitachi}
\begin{split}
D^{LC} & = D^B - \frac{2}{3} (\chi')^{+++} - \frac{2}{3} (\chi')^{---}\\
& = D' + \frac{1}{3} (\chi')^{+++} + \frac{1}{3} (\chi')^{---} + (\chi')^{+-+} + (\chi')^{-+-}\, .
\end{split}
\end{equation}

\noindent
Hence, we conclude that:

\begin{equation}\label{eq:Levi-Civitaexp}
\begin{split}
D^{LC}_{a_+} c_+ &= 2\Pi_+\(\nabla^{1/3}_{v} w\)\, ,\\
D^{LC}_{b_-} c_+ &= 2\Pi_+\(\nabla^+_{u} w\)\, ,\\
D^{LC}_{a_+} b_- &= 2\Pi_-\(\nabla^-_{v} u\)\, ,\\
D^{LC}_{b_-} d_- &= 2\Pi_-\(\nabla^{-1/3}_{u} x\)\, .
\end{split}
\end{equation}

\begin{remark}
\label{rem:D0}
Note that the connection $D^{LC}$ equals the torsion-free connection $D^0$ constructed in \cite{2014CMaPh.332...89G}, specialized to an exact Courant algebroid.
\end{remark}

%%%%%%%%%%%%%%%%%%%%%%%%%%%%%%%%%%%%%%%%%%%%%%%%%%%%%%%%%%%%%%%%%%%%%%%%%%%%%%%%%%%%%%%%%%%%%%%%%%%%%%%%%%%%%%%%%%%%%%%%%%%%
%%%%%%%%%%%%%%%%%%%%%%%%%%%%%%%%%%%%%%%%%%%%%%%%%%%%%%%%%%%%%%%%%%%%%%%%%%%%%%%%%%%%%%%%%%%%%%%%%%%%%%%%%%%%%%%%%%%%%%%%%%%%

\subsection{Generalized curvature}
\label{sec:generalizedcurvature}

%%%%%%%%%%%%%%%%%%%%%%%%%%%%%%%%%%%%%%%%%%%%%%%%%%%%%%%%%%%%%%%%%%%%%%%%%%%%%%%%%%%%%%%%%%%%%%%%%%%%%%%%%%%%%%%%%%%%%%%%%%%%
%%%%%%%%%%%%%%%%%%%%%%%%%%%%%%%%%%%%%%%%%%%%%%%%%%%%%%%%%%%%%%%%%%%%%%%%%%%%%%%%%%%%%%%%%%%%%%%%%%%%%%%%%%%%%%%%%%%%%%%%%%%%

In this section we provide formulae for the generalized curvature, generalized Ricci tensor and generalized scalar curvature of an admissible metric on the exact Courant algebroid $E$, with respect to the canonical Levi-Civita connection $D^{LC}$. 

The generalized curvature of a generalized connection $D$ was defined by Gualtieri in \cite{2007arXiv0710.2719G} as follows: 

\begin{equation}
GR(e_1,e_2) = D_{e_1}D_{e_2} - D_{e_2}D_{e_1} - D_{[[e_1,e_2]]} \in \mathfrak{o}(E)\, ,
\end{equation}

\noindent
for $e_1,e_2 \in C^\infty(E)$. This quantity becomes tensorial when evaluated on a pair of orthogonal sections. In particular, given a generalized metric

\begin{equation}
E = V_{+} \oplus V_{-}\, ,
\end{equation}

\noindent
we obtain a tensor by restriction

\begin{equation}
GR \in V_{+}^{*} \otimes V_{-}^{*} \otimes \mathfrak{o}(E)\, .
\end{equation}

\noindent
Let us fix an admissible metric $V_{+} \subset E$, with corresponding standard metric $g$ and consider the torsion free, compatible, generalized connection $D^{LC}$. We recall the explicit calculation of its curvature from \cite{2014CMaPh.332...89G}

\begin{equation}
GR(a_+,b_-)c_+ \in V_{+},
\end{equation}

\noindent
for $a_+,c_+ \in V_{+}$, and $b_- \in V_{-}$.  Using the splitting $E \cong T \oplus T^*$ provided by $V_{+}$ we obtain:

\begin{equation}
D^{LC}_{a_+}D^{LC}_{b_-}c_+  = 2\Pi_{+}\left(\nabla_v^{1/3}\nabla^+_u w\right)\, ,
\end{equation}

\begin{equation}
D^{LC}_{b_-}D^{LC}_{a_+}c_+  = 2\Pi_{+}\left(\nabla_u^+\nabla^{1/3}_v w\right)\, .
\end{equation}

\noindent
Using the equality

\begin{equation}
[[a_+,b_-]] = [v,u] - g(\nabla^g_vu + \nabla^g_uv,\cdot) + H(v,u,\cdot)\, ,
\end{equation}

\noindent
we also obtain:

\begin{align*}
D^{LC}_{[[a_+,b_-]]}c_+ & =  2\Pi_+\(\nabla^g_{[v,u]}w\)+ \frac{1}{3}\chi^{+++}_{[[a_+,b_-]]}c_+ + \chi^{+-+}_{[[a_+,b_-]]}c_+\\
& = 2\Pi_+\Bigg{(}\nabla^g_{[v,u]}w + \frac{1}{6}g^{-1}H(2[v,u] + \nabla^g_vu + \nabla^g_u v,w,\cdot)\\
&  - \frac{1}{6}g^{-1}H(g^{-1}H(v,u,\cdot),w,\cdot)\Bigg{)} + \nabla^g_vu + \nabla^g_uv - g^{-1}H(v,u,\cdot)\, .
\end{align*}

\noindent
Finally, the identity:

\begin{equation}
(\nabla_X^{g}H)(u,w,\cdot) =  \nabla_X^g(H(u,w,\cdot)) - H(\nabla^g_vu,w,\cdot) - H(u,\nabla^gw,\cdot)\, ,
\end{equation}

\noindent
lead us to the following expression for the curvature:

\begin{lemma}[\cite{2014CMaPh.332...89G}]
\label{lem:curvature}
\begin{equation}
\begin{split}
GR(a_+,b_-)c_+  & = 2\Pi_+\Bigg{(}R^g(v,u)w + g^{-1}\Bigg{(}\frac{1}{2}(\nabla^g_vH)(u,w,\cdot) - \frac{1}{6}(\nabla^g_uH)(v,w,\cdot)\\
& + \frac{1}{12}H(v,g^{-1}H(u,w,\cdot),\cdot) - \frac{1}{12}H(u,g^{-1}H(v,w,\cdot),\cdot)\\
& - \frac{1}{6}H(w,g^{-1}H(v,u,\cdot),\cdot)\Bigg{)}\,\Bigg{)}\,
\end{split}
\end{equation}
\end{lemma}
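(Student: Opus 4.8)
The plan is to compute $GR(a_+,b_-)c_+$ directly from its definition $GR(e_1,e_2) = D_{e_1}D_{e_2} - D_{e_2}D_{e_1} - D_{[\![e_1,e_2]\!]}$, specialized to the canonical Levi-Civita connection $D^{LC}$ and to sections $a_+ = v+gv$, $c_+ = w+gw$ in $V_+$ and $b_- = u-gu$ in $V_-$, using the explicit formulae \eqref{eq:Levi-Civitaexp} for $D^{LC}$ in the splitting $E \simeq TM \oplus T^*M$ provided by $V_+$. The three ingredients are already assembled in the excerpt: the two ``second-derivative'' terms $D^{LC}_{a_+}D^{LC}_{b_-}c_+ = 2\Pi_+(\nabla^{1/3}_v\nabla^+_u w)$ and $D^{LC}_{b_-}D^{LC}_{a_+}c_+ = 2\Pi_+(\nabla^+_u\nabla^{1/3}_v w)$, and the bracket term $D^{LC}_{[\![a_+,b_-]\!]}c_+$, which has been expanded using the formula for $[\![a_+,b_-]\!]$ together with the definition of $D^{LC} = D' + \tfrac13\chi^{+++} + \tfrac13\chi^{---} + \chi^{+-+} + \chi^{-+-}$.

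First I would substitute $\nabla^\pm = \nabla^g \pm \tfrac12 g^{-1}H$ and $\nabla^{\pm 1/3} = \nabla^g \pm \tfrac16 g^{-1}H$ into the three terms and expand everything in terms of $\nabla^g$ and $g^{-1}H$. The two second-order terms combine: the pure $\nabla^g\nabla^g$ pieces give the Riemann tensor $R^g(v,u)w$ (together with a $\nabla^g_{[v,u]}$ piece that cancels against the corresponding term in the bracket contribution), while the cross terms produce $\nabla^g H$ contractions and $H \circ H$ contractions. Next I would collect the $\nabla^g H$ terms: from $\tfrac16\nabla^g_v(H(u,w,\cdot))$ and $-\tfrac12 H(\nabla^g_v u, w,\cdot)$-type pieces coming out of the bracket expansion, rewriting $\nabla^g_X(H(u,w,\cdot))$ via the stated Leibniz identity $(\nabla^g_X H)(u,w,\cdot) = \nabla^g_X(H(u,w,\cdot)) - H(\nabla^g_X u,w,\cdot) - H(u,\nabla^g_X w,\cdot)$ to convert raw derivatives of contractions into covariant derivatives $(\nabla^g_v H)(u,w,\cdot)$ and $(\nabla^g_u H)(v,w,\cdot)$; the bookkeeping should produce exactly the coefficients $\tfrac12$ and $-\tfrac16$ in Lemma \ref{lem:curvature}. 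Finally I would collect the quadratic-in-$H$ terms — the $\tfrac{1}{12}H(v,g^{-1}H(u,w,\cdot),\cdot)$, $-\tfrac{1}{12}H(u,g^{-1}H(v,w,\cdot),\cdot)$, and $-\tfrac16 H(w,g^{-1}H(v,u,\cdot),\cdot)$ terms — which arise both from the cross terms in $\nabla^{1/3}\nabla^+$ and from the last line of the expanded $D^{LC}_{[\![a_+,b_-]\!]}c_+$ (the $-\tfrac16 g^{-1}H(g^{-1}H(v,u,\cdot),w,\cdot)$ term), using the skew-symmetry of $H$ to match signs and indices.

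The main obstacle is bookkeeping: keeping careful track of which covariant derivative ($\nabla^+$, $\nabla^{1/3}$, $\nabla^{-1/3}$, or $\nabla^g$) acts in each slot, of the factors of $2$ coming from the $2\Pi_+$ normalization in \eqref{eq:Levi-Civitaexp}, and of the projection $\Pi_+$ interacting with the $\chi$-terms (one must verify that the $T^*M$-components assemble correctly so that the result genuinely lands in $V_+$ and is expressed purely through $g$, $R^g$, $H$ and $\nabla^g H$). In particular, the coefficient $\tfrac16$ mismatch between $\nabla^{1/3}$ and $\nabla^+$ is precisely what breaks the naive symmetry and leaves the asymmetric $\tfrac12$ vs.\ $-\tfrac16$ pattern on the $\nabla^g H$ terms, so the delicate point is to not accidentally symmetrize. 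Since the excerpt already cites \cite[Lemma 3.5]{2014CMaPh.332...89G} for the manipulation of the $\chi$-terms and records all the intermediate identities, the proof is essentially a disciplined collection of like terms, and I would organize it by grouping the output into the three displayed families (Riemann, $\nabla^g H$, $H\!\cdot\! H$) and checking each coefficient against Lemma \ref{lem:curvature}.
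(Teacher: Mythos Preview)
Your proposal is correct and follows essentially the same approach as the paper: the paper computes the three ingredients $D^{LC}_{a_+}D^{LC}_{b_-}c_+$, $D^{LC}_{b_-}D^{LC}_{a_+}c_+$, and $D^{LC}_{[\![a_+,b_-]\!]}c_+$ exactly as you describe, then invokes the Leibniz identity for $(\nabla^g_X H)(u,w,\cdot)$ and declares that these ``lead us to the following expression for the curvature'', leaving the term collection implicit. Your plan simply makes that final bookkeeping explicit, organized by Riemann, $\nabla^g H$, and $H$-quadratic contributions.
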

\noindent

As noticed first by Gualtieri and Hitchin, there is a natural Ricci tensor associated to any generalized metric connection
\begin{equation}
GR \in V_{-}^{*} \otimes V_{+}^{*}.
\end{equation}
In particular, attached to any admissible metric $G$ there is a natural Ricci tensor \cite[Sec. 4.2]{2014CMaPh.332...89G}, defined via the the canonical Levi-Civita connection $D^{LC}$. Acting on $b_-, c_+$, this is defined as the trace of the endomorphism:

\begin{equation}
%\mathrm{Ric}^G(e_2^-,e_3^+) = \tr\(
a_+ \to GR(a_+,b_-)c_+.
\end{equation}

\noindent
To provide a formula for $GR$, define the tensor:

\begin{equation}
H \circ H (u,w) \equiv \sum_{j=1}^n g(H(e_j,u,\cdot),H(e_j,w,\cdot))\, ,
\end{equation}

\noindent
where $\{e_j\}^n_{j=1}$ is an orthonormal basis for $g$. Notice that, in coordinates,

\begin{equation}
(H \circ H)_{ij} = H_{ikl}H_j^{\vspace{0.1cm}kl}\, .
\end{equation}

\noindent
Using $H \circ H$, we can now write the Ricci tensor of the connection $\nabla^+$ as

\begin{equation}
\mathrm{Ric}^{+} = \mathrm{Ric}^g - \frac{1}{4} H \circ H - \frac{1}{2}d^*H\, ,
\end{equation}

\noindent
where $\mathrm{Ric}^g$ denotes the Ricci tensor of $g$. Recall that the adjoint of the exterior differential of a $k$-form $\beta$ can be calculated as

\begin{equation}
d^*\beta = - \sum_{j=1}^n i_{e_j}(\nabla^{g*}_{e_j}\beta)
\end{equation}

\noindent
Then, as a straightforward consequence of Lemma \ref{lem:curvature}, we obtain the desired expression for the generalized Ricci tensor.

\begin{prop}[\cite{2014CMaPh.332...89G}]\label{prop:Ricci}
\begin{equation}
GR(b_-,c_+)  = \mathrm{Ric}^+(u,w)\, .
\end{equation}
\end{prop}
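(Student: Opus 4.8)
The plan is to deduce the formula by taking the trace of the curvature tensor computed in Lemma~\ref{lem:curvature}. Recall that, by definition, $GR(b_-,c_+)$ is the trace over $V_+$ of the endomorphism $a_+\mapsto GR(a_+,b_-)c_+$. Working in the splitting $E\cong TM\oplus T^*M$ induced by $V_+$ and identifying $V_+\cong TM$ through the anchor, Lemma~\ref{lem:curvature} presents this endomorphism, with $b_-=u-gu$ and $c_+=w+gw$, as $v\mapsto\Phi(v,u,w)$, where $\Phi(v,u,w)\in TM$ is the expression inside $2\Pi_+(\cdot)$ there --- the factor $2$ exactly cancelling the $\tfrac12$ that $2\Pi_+$ carries under the identification $V_+\cong TM$. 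Since the trace is insensitive to this isomorphism, the proposition reduces to computing $\operatorname{tr}(v\mapsto\Phi(v,u,w))$, which I would carry out in a $g$-orthonormal frame $\{e_j\}$ by contracting the output of $\Phi$ against its $v$-slot, term by term.

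The Riemann term $v\mapsto R^g(v,u)w$ contributes $\mathrm{Ric}^g(u,w)$ by the definition of the Ricci tensor. Among the two terms linear in $\nabla^gH$, the one built from $(\nabla^g_uH)(v,w,\cdot)$ vanishes under the trace: contracting $v$ against the free slot of the $3$-form $\nabla^g_uH$ puts $e_j$ in both its first and third arguments, so each summand is zero. For the term built from $(\nabla^g_vH)(u,w,\cdot)$ one gets $\tfrac12\sum_j(\nabla^g_{e_j}H)(u,w,e_j)$; rewriting $(\nabla^g_{e_j}H)(u,w,e_j)=(\nabla^g_{e_j}H)(e_j,u,w)$ by cyclic symmetry of the $3$-form $\nabla^g_{e_j}H$ and using the formula $d^*H=-\sum_j i_{e_j}(\nabla^{g*}_{e_j}H)$ recalled above, this becomes $-\tfrac12 d^*H(u,w)$.

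For the three quadratic terms I would exploit the completeness relation $g^{-1}=\sum_j g(e_j,e_j)\,e_j\otimes e_j$ together with the antisymmetry and cyclic symmetry of $H$. The term with $H(v,g^{-1}H(u,w,\cdot),\cdot)$ again vanishes under the trace, since contracting $v$ against the final slot repeats $e_j$ in slots one and three of $H$. The remaining two are handled by the identity $\sum_j H\big(x,\,g^{-1}H(e_j,y,\cdot),\,e_j\big)=(H\circ H)(x,y)$, valid for all $x,y\in TM$, which follows from the completeness relation and a reshuffling of the contracted indices of the two copies of $H$ using cyclic symmetry. Applying it with $(x,y)=(u,w)$ and, using $(H\circ H)(w,u)=(H\circ H)(u,w)$, with $(x,y)=(w,u)$, the terms with coefficients $-\tfrac1{12}$ and $-\tfrac16$ contribute $-\tfrac1{12}(H\circ H)(u,w)$ and $-\tfrac16(H\circ H)(u,w)$, i.e.\ $-\tfrac14(H\circ H)(u,w)$ altogether. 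Collecting the three pieces yields $GR(b_-,c_+)=\mathrm{Ric}^g(u,w)-\tfrac14(H\circ H)(u,w)-\tfrac12 d^*H(u,w)=\mathrm{Ric}^+(u,w)$, as asserted.

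The main obstacle is the bookkeeping in the last step: establishing $\sum_j H(x,g^{-1}H(e_j,y,\cdot),e_j)=(H\circ H)(x,y)$ requires tracking precisely which of the three slots of each $H$ is contracted, invoking cyclic symmetry to bring the pair into the canonical $H\circ H$ pattern, and keeping the signs $g(e_j,e_j)$ straight in the completeness relation and in the expression for $d^*$ --- everything else is routine. One should also verify once and for all the factor-of-$2$ normalization and that the $V_+$-trace genuinely matches the $TM$-trace of $\Phi$ under the anchor identification.
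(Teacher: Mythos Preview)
Your proposal is correct and follows exactly the approach the paper indicates: the paper simply states that the proposition is ``a straightforward consequence of Lemma~\ref{lem:curvature}'', and you have carried out that trace computation in detail. Your term-by-term analysis --- the Riemann term giving $\mathrm{Ric}^g$, the two vanishing traces from antisymmetry, the $\nabla^g H$ term producing $-\tfrac12 d^*H$, and the surviving quadratic terms combining to $-\tfrac14(H\circ H)$ --- is precisely the content of ``straightforward'', and your handling of the anchor identification $V_+\cong TM$ and the factor of $2$ from $2\Pi_+$ is correct.
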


\noindent
The very definition of the generalized Ricci tensor, as an element

\begin{equation}
GR \in V_{-}^{*} \otimes V_{+}^{*}\, ,
\end{equation}

\noindent
implies that there is no natural scalar quantity associated to it. To circumvent this problem in a different context, it was proposed in \cite{Coimbra:2011nw} to associate a generalized scalar curvature:

\begin{equation}
GS \in C^\infty(M)\, ,
\end{equation}

\noindent
using `squares of Dirac operators' naturally associated to a torsion-free generalized connection. In our setup, this is done as follows: by compatibility with $G$, $D^{LC}$ induces differential operators:

\begin{equation}
D^{LC}_{\pm}\colon V_{+}\to V_{+}\otimes V_\pm^{\ast}\, .
\end{equation}

\noindent
Assuming the spin condition $w_{2}(TM) = 0$ for the manifold $M$, there exist spinor bundles $S(V_{\pm}) = S_{+}(V_{\pm})\oplus S_{-}(V_{\pm})$. From $D^{LC}_{+}$ and $D^{LC}_{-}$ we get differential operators on spinors:

\begin{equation}
\label{eq:spinconnection}
D^{S}_\pm\colon S_{+}(V_{+})\to S_{+}(V_{+})\otimes V_{\pm}^{\ast}\, ,
\end{equation}

\noindent
and the associated Dirac operator:

\begin{equation}
\label{eq:Diracop}
\slashed{D}^{S}_{+} \colon S_{+}(V_{+})\to S_{-}(V_{+})\, .
\end{equation}

\noindent
The generalized scalar curvature $GS$ associated to $D^{LC}$ is obtained from a Lichnerowicz-type equation that compares the Dirac operator squared \eqref{eq:Diracop} with the rough Laplacian of $D_-^S$ in \eqref{eq:spinconnection}:

\begin{equation}
\left(\slashed{D}^{S}_{+}\right)^2\epsilon = \left( \left( D^{S}_{-}\right)^{\ast} D^{S}_{-} + GS\right)\epsilon\, ,\qquad \epsilon \in S_{+}(V_{+})\, .
\end{equation}

\noindent
Using Bismut's original Lichnerowicz-type formula for connections with skew torsion \cite{Bismut} (see also \cite{2003math......5069A,2010arXiv1012.2087F}), combined with the explicit formulae \ref{eq:Levi-Civitaexp}, it follows that:

\begin{equation}
\label{eq:GS}
4GS = S^{g} - \frac{1}{2}|H|^2,
\end{equation}

\noindent
where $S^{g}$ is the scalar curvature of the standard Levi-Civita connection on $TM$ and the scalar $|H|^2$ is the norm square of $H$ with respect to the Lorentzian metric $g$
$$
|H|^2 = \frac{1}{6}H_{ijk}H^{ijk}.
$$

We shall stress that, in generalized geometry, the curvature quantities GRic and GS are independent, that is, GS cannot be obtained from GRic by taking any natural trace.

%%%%%%%%%%%%%%%%%%%%%%%%%%%%%%%%%%%%%%%%%%%%%%%%%%%%%%%%%%%%%%%%%%%%%%%%%%%%%%%%%%%%%%%%%%%%%%%%%%%%%%%%%%%%%%%%%%%%%%%%%%%%
%%%%%%%%%%%%%%%%%%%%%%%%%%%%%%%%%%%%%%%%%%%%%%%%%%%%%%%%%%%%%%%%%%%%%%%%%%%%%%%%%%%%%%%%%%%%%%%%%%%%%%%%%%%%%%%%%%%%%%%%%%%%
%%%%%%%%%%%%%%%%%%%%%%%%%%%%%%%%%%%%%%%%%%%%%%%%%%%%%%%%%%%%%%%%%%%%%%%%%%%%%%%%%%%%%%%%%%%%%%%%%%%%%%%%%%%%%%%%%%%%%%%%%%%%
%%%%%%%%%%%%%%%%%%%%%%%%%%%%%%%%%%%%%%%%%%%%%%%%%%%%%%%%%%%%%%%%%%%%%%%%%%%%%%%%%%%%%%%%%%%%%%%%%%%%%%%%%%%%%%%%%%%%%%%%%%%%

\section{Generalized metrics and pure $\mathcal{N} = 1$ six-dimensional Supergravity}
\label{sec:GSupergravity}

%%%%%%%%%%%%%%%%%%%%%%%%%%%%%%%%%%%%%%%%%%%%%%%%%%%%%%%%%%%%%%%%%%%%%%%%%%%%%%%%%%%%%%%%%%%%%%%%%%%%%%%%%%%%%%%%%%%%%%%%%%%%
%%%%%%%%%%%%%%%%%%%%%%%%%%%%%%%%%%%%%%%%%%%%%%%%%%%%%%%%%%%%%%%%%%%%%%%%%%%%%%%%%%%%%%%%%%%%%%%%%%%%%%%%%%%%%%%%%%%%%%%%%%%%
%%%%%%%%%%%%%%%%%%%%%%%%%%%%%%%%%%%%%%%%%%%%%%%%%%%%%%%%%%%%%%%%%%%%%%%%%%%%%%%%%%%%%%%%%%%%%%%%%%%%%%%%%%%%%%%%%%%%%%%%%%%%
%%%%%%%%%%%%%%%%%%%%%%%%%%%%%%%%%%%%%%%%%%%%%%%%%%%%%%%%%%%%%%%%%%%%%%%%%%%%%%%%%%%%%%%%%%%%%%%%%%%%%%%%%%%%%%%%%%%%%%%%%%%%

In this section we describe the main result of this paper, namely, the geometrization of pure $\mathcal{N} = 1$ six-dimensional Supergravity using generalized geometry.

%%%%%%%%%%%%%%%%%%%%%%%%%%%%%%%%%%%%%%%%%%%%%%%%%%%%%%%%%%%%%%%%%%%%%%%%%%%%%%%%%%%%%%%%%%%%%%%%%%%%%%%%%%%%%%%%%%%%%%%%%%%%
%%%%%%%%%%%%%%%%%%%%%%%%%%%%%%%%%%%%%%%%%%%%%%%%%%%%%%%%%%%%%%%%%%%%%%%%%%%%%%%%%%%%%%%%%%%%%%%%%%%%%%%%%%%%%%%%%%%%%%%%%%%%

\subsection{Pure $\mathcal{N}=1$ six-dimensional Supergravity}
\label{sec:6dSupergravity}

%%%%%%%%%%%%%%%%%%%%%%%%%%%%%%%%%%%%%%%%%%%%%%%%%%%%%%%%%%%%%%%%%%%%%%%%%%%%%%%%%%%%%%%%%%%%%%%%%%%%%%%%%%%%%%%%%%%%%%%%%%%%
%%%%%%%%%%%%%%%%%%%%%%%%%%%%%%%%%%%%%%%%%%%%%%%%%%%%%%%%%%%%%%%%%%%%%%%%%%%%%%%%%%%%%%%%%%%%%%%%%%%%%%%%%%%%%%%%%%%%%%%%%%%%

Le $(M,g)$ be a six-dimensional oriented spin Lorentzian manifold. The frame bundle of $M$ admits a reduction to $SO(1,5)$, which then admits a lift to $Spin(1,5)$ since the manifold is spin. Let $\Delta_{+}$ denote the positive-chirality spinor representation of $Spin(1,5)$, which in our conventions  is a two-dimensional quaternionic representation, or equivalently, a four-dimensional complex representation with an invariant quaternionic structure\footnote{A quaternionic structure on a complex vector space $V$ is complex antilinear map $J\colon V \to V$ such that $J^{2} = -1$.}.
 
Similarly, let $s_{1}$ be the fundamental representation of the R-symmetry group of $\mathcal{N}=1$ Supergravity, which is isomorphic to $Sp(1)$. Hence, $s_{1}$ is a two-dimensional complex representation with an invariant quaternionic structure $j \colon s_{1} \to s_{1}$.

The tensor product $\Delta_{+}\otimes s_{1}$ of these two representations is an eight-dimensional complex representation with an invariant conjugation $c$ given by $c = J \otimes j$, and therefore it is a complex representation of real type. In other words, it is the complexification of a real representation $\Delta^{\mathbb{R}}_{+}$, which hence satisfies:

\begin{equation}
\Delta_{+} \otimes_{\mathbb{C}} s_{1}  \simeq\Delta^{\mathbb{R}}_{+} \otimes_{\mathbb{R}} \mathbb{C}\,.
\end{equation} 

\noindent
The real representation $\Delta^{\mathbb{R}}_{+}$ is the eight-dimensional symplectic real subspace of $\Delta_{+} \otimes_{\mathbb{C}} s_{1}$ fixed under the conjugation $c$, and it is precisely the relevant spinorial representation for the six-dimensional $\mathcal{N}=1$ Supergravity theory. We will denote by $\mathsf{S}_{+}$ the spinor bundle on $M$ associated to this representation, namely:

\begin{equation}
S_{+} \otimes_{\mathbb{C}} S_{1}  \simeq \mathsf{S}_{+} \otimes_{\mathbb{R}} \mathbb{C}\,.
\end{equation} 

\noindent
where $S_{+}$ is the positive-chirality spinor bundle over $M$ and $S_{1}$ is the vector bundle associated to a choice of $Sp(1)$-principal bundle over $M$ that corresponds to the R-symmetry group of the theory. For pure $\mathcal{N}=1$ six-dimensional Supergravity $S_1$ is trivial, endowed with the trivial connection.

After this brief \emph{detour}, based on the exposition presented in \cite{2007math......2205F,Figueroa-O'Farrill:2013aca}, we are ready to introduce the matter content of six-dimensional pure $\mathcal{N}=1$ Supergravity. This is given by %which consists of the 
a Lorentzian metric $g$, a self-dual closed three-form $H\in\Omega^{3}(M)$ and the gravitino or Rarita-Schwinger field $\Psi\in\Gamma(\mathsf{S}_{+}\otimes T^{\ast}M)$. Due to the fact that $H$ is self-dual
\begin{eqnarray}
H^{-} = \frac{1}{2}(H - *H) = 0\, ,
\end{eqnarray}
there is no standard action for the theory \cite{Marcus:1982yu}. The theory is thus defined through the equations of motion, which are given by \cite{Nishino:1984gk}:
\begin{equation}\label{eq:equationsofmotion}
\mathrm{Ric}^{g} = \frac{1}{4} H \circ H\, ,\qquad d\ast H = 0\, , \qquad H^{-} = 0,
\end{equation}
where
$$
(H \circ H)_{ij} = H_{ikl} H_j^{\hspace{0.1cm}kl}.
$$

\noindent
In addition to the equations of motion, the supersymmetry variation of the gravitino
\begin{equation}
%\nabla^{+}_{v}\epsilon = 
\nabla_{v}\epsilon + \frac{1}{2}\iota_{v} H\cdot \epsilon\, , \qquad v\in\mathfrak{X}(M),\qquad \epsilon \in \Gamma(\mathsf{S}_{+}) ,
\end{equation} 
defines a connection on the spinor bundle $\mathsf{S}_{+}$, where $\nabla$ is the spin connection on $\mathsf{S}_{+}$ induced by the Levi-Civita connection and $\cdot$ denotes
the Clifford action of forms on spinors. This connection is induced by the metric compatible spin connection $\nabla^{+}$ with fully antisymmetric %self-dual closed 
torsion $H$
 
\begin{equation}
\nabla^{+} = \nabla + \frac{1}{2}g^{-1}H,
\end{equation}
where $(g^{-1}H)_{ij}^{\hspace{0.2cm}l} = H_{ij}^{\hspace{0.2cm}l}$ is a section of $T^*M \otimes \mathfrak{so}(1,5)$. Using $\nabla^+$, the condition for having a supersymmetric background can be written simply as
$$
\nabla^+ \epsilon = 0.
$$
Here, $\mathfrak{so}(1,5)$ denotes the bundle of skew-symmetric endomorphisms of $TM$ with respect to the metric $g$.

%%%%%%%%%%%%%%%%%%%%%%%%%%%%%%%%%%%%%%%%%%%%%%%%%%%%%%%%%%%%%%%%%%%%%%%%%%%%%%%%%%%%%%%%%%%%%%%%%%%%%%%%%%%%%%%%%%%%%%%%%%%%
%%%%%%%%%%%%%%%%%%%%%%%%%%%%%%%%%%%%%%%%%%%%%%%%%%%%%%%%%%%%%%%%%%%%%%%%%%%%%%%%%%%%%%%%%%%%%%%%%%%%%%%%%%%%%%%%%%%%%%%%%%%%

\subsection{$SO(1,5) \times SO(5,1)$-metrics and self-duality}

%%%%%%%%%%%%%%%%%%%%%%%%%%%%%%%%%%%%%%%%%%%%%%%%%%%%%%%%%%%%%%%%%%%%%%%%%%%%%%%%%%%%%%%%%%%%%%%%%%%%%%%%%%%%%%%%%%%%%%%%%%%%
%%%%%%%%%%%%%%%%%%%%%%%%%%%%%%%%%%%%%%%%%%%%%%%%%%%%%%%%%%%%%%%%%%%%%%%%%%%%%%%%%%%%%%%%%%%%%%%%%%%%%%%%%%%%%%%%%%%%%%%%%%%%

In Euclidean gravity, an oriented Riemannian manifold satisfying the Einstein vacuum equations, typically assumed to be locally asymptotically flat, is said to be a gravitational instanton. An important class of gravitational instantons are self-dual instantons in four dimensions \cite{Hawking:1976jb}: those whose curvature tensor $R$, thought of as a bundle-valued 2-form, satisfies $R = *R$ where $*$ denotes the Hodge star operator. %This are called self-dual gravitational instantons  . 
More generally, one can relax this condition and ask for the Weyl curvature to be self-dual, leading to the well-studied notion of \emph{self-dual metric}. Gravitational instantons and self-dual metrics are analogues in Euclidean gravity of instantons and self-dual gauge fields in Yang-Mills theories. In Lorentzian signature, which is the case of interest for this letter, the analogue concept %to the Riemannian gravitational instanton 
is the so-called \emph{gravitational monopole}.

In this context, it is natural to extend the notion of self-dual gravitational instanton or gravitational monopole to the realm of generalized geometry, by finding appropriate notions of self-dual metric. %something that to the best knowledge of the authors has not been done so far. 
We present now a proposal for such a notion for the specific case of an exact Courant algebroid over a six-dimensional manifold $M$ endowed with an admissible Lorentzian generalized metric. We will see in Section \ref{sec:6dsugraGG} how it naturally appears in the geometrization of pure $\mathcal{N} = 1$ Supergravity.

Let $M$ be an oriented six-dimensional manifold. Let $G$ be a generalized metric on an exact Courant algebroid $E$ over $M$ of signature $(1,5)$ or, equivalently, an $SO(1,5) \times SO(5,1)$-reduction of the bundle of $SO(6,6)$-frames of $E$. Assuming further that $G$ is admissible, the induced splitting $E \cong TM \oplus T^*M$ provides a canonical isomorphism of the $SO(6,6)$-spinor bundle\footnote{Notice that the spinor bundle $S_{6,6}$ always exists, even when $M$ is not a spin manifold.} $S_{6,6}$ of $E$ with the bundle of forms:

\begin{equation}
S_{6,6} \simeq\Lambda^*T^*M\, .
\end{equation}

\noindent
Using the orientation on $M$, following \cite{2012arXiv1203.2385B,2004math......9093G}, one can define a generalized Hodge star operator on $\Lambda^*T^*M$: Since $V_+ = \mathrm{Ker} (G - \mathbb{I})$ is isomorphic to $TM$, the orientation on $M$ induces one on $V_+$. Then we let $\{e_1, \ldots,e_6\}$ be a positive orthogonal basis of $V_+$, such that $|e_j| = \pm 1$, let 
$$
\star = - e_6 \ldots e_1 \in Cl(TM \oplus T^*M)
$$ 
--- the Clifford bundle --- and define the (generalized) Hodge star as the Clifford action of $\star$ on spinors:
$$
\star \colon \Lambda^*T^*M \to \Lambda^*T^*M.
$$
Notice that $\star^2 = 1$, so $\star$ decomposes the space of forms into its $\pm 1$-eigenspaces.
\begin{definition}
%On the six dimensional manifold $M$, 
We say that a $SO(6,6)$-spinor $\alpha \in \Lambda^*T^*M$ is (anti)self-dual with respect to the admisible $SO(1,5) \times SO(5,1)$-metric $G$ if $\star \alpha = (-)\alpha$.
\end{definition}

%In the metric splitting of TM, the generalized Hodge star relates to the classical Hodge star, denoted by *Hod, via the Chevalley pairing:
In the metric splitting, the genereralized Hodge star agrees with its classical counterpart $*$, up to signs: if $\alpha$ has degree $j$, we have
$$
\star \alpha = (-1)^{\frac{(6-j)(5-j)}{2}+1}*\alpha.
$$
In particular, for a three-form $H \in \Omega^3(M)$, regarded as an $SO(6,6)$-spinor in the metric splitting, we have $\star H = *H$.

\begin{definition}
We say that the admissible $SO(1,5) \times SO(5,1)$-metric $G$ is self-dual if the induced curvature three-form $H \in \Omega^3(M)$, regarded as an $SO(6,6)$-spinor, is self-dual, that is, $\star H = H$.
\end{definition}

Although not completely obvious, self-duality for $G$ is natural in generalized geometry, that is, is invariant under generalized diffeomorphisms. To check this, let $B \in \Omega^2(M)$ be a closed $B$-field, inducing an automorphisms $e^B$ of $TM \oplus T^*M$ (preserving the bracket and the product)
$$
e^B \colon u + \xi \to u + \iota_u B + \xi.
$$
The transformation $e^B$ on $V_+$ leaves the three-form curvature $H$ invariant, as it preserves the bracket, but induces a non trivial action on spinors
$$
\alpha \to e^B \wedge \alpha = (1 + B \wedge B / 2 + B \wedge B \wedge B / 6) \wedge \alpha.
$$
Therefore, regarded as a $SO(6,6)$-spinor, $H$ transforms as
$$
H \to e^B \wedge H.
$$
Finally, we note that the generalized Hodge star associated to
$$
e^B V_+ = \{v + \iota_v B + g(v)| v \in TM\}
$$
is
$$
e^B \star e^{-B},
$$
which proves our claim.

%\begin{remark}
%Being a natural notion in generalized geometry in six-dimensions, self-duality for a generalized Lorentz metric produces an interesting coupling of the classical metric $g$ with the closed three-form $H$. It is worth mentioning that N. Hitchin has previously studied the geometry of self-dual three-forms in Lorentz signature in his seminal paper \cite{Hit0}. Generically, such a three-form defines a paracomplex structure on open sets in the manifold $M$. It would be interesting to study further the interrelation of this paracomplex structure and the $SO(1,5)$ structure.
%\end{remark}

\noindent
\begin{remark}
%Therefore, we have introduced a natural notion of \emph{generalized self-dual metric} for the specific case of an exact Courant algebroid over a six-dimensional Lorentzian manifold $M$. We have chosen this particular case because it is the interesting to geometrize six-dimensional pure, $\mathcal{N}=1$ Supergravity, but
We expect that other interesting notions of generalized self-dual metric can be introduced in different dimensions for appropriate signature.
\end{remark}

%Coming back to the six-dimensional Lorentzian case, and in
By analogy with the standard nomenclature in physics, we define a \emph{generalized self-dual gravitational monopole} as follows.

\begin{definition}
Let $(E,G)$ be an exact Courant algebroid $E\to M$ over a six-dimensional manifold $M$ equipped with an admissible generalized metric $G$ of signature $(1,5)$.  We will say that $(E,G)$ is a gravitational monopole if $G$ satisfies the generalized vacuum Einstein equations:

\begin{equation}\label{eq:generalizedvacuum}
GR = 0\, , \qquad GS = 0\, .
\end{equation}

\noindent
In addition, will say that $(E,G)$ is a self-dual gravitational monopole if $(E,G)$ is a gravitational monopole with self-dual generalized metric.
\end{definition}

\noindent
We will see in Section \ref{sec:6dsugraGG} that the solutions of six-dimensional pure, $\mathcal{N}=1$ Supergravity are indeed generalized self-dual gravitational monopoles. We recall that, in generalized geometry, the curvature quantities GRic and GS are independent, that is, GS cannot be obtained from GRic by taking any sensible trace. Nevertheless, for self-dual gravitational monopoles the condition $GS = 0$ is redundant.

%%%%%%%%%%%%%%%%%%%%%%%%%%%%%%%%%%%%%%%%%%%%%%%%%%%%%%%%%%%%%%%%%%%%%%%%%%%%%%%%%%%%%%%%%%%%%%%%%%%%%%%%%%%%%%%%%%%%%%%%%%%%
%%%%%%%%%%%%%%%%%%%%%%%%%%%%%%%%%%%%%%%%%%%%%%%%%%%%%%%%%%%%%%%%%%%%%%%%%%%%%%%%%%%%%%%%%%%%%%%%%%%%%%%%%%%%%%%%%%%%%%%%%%%%

\begin{prop}\label{prop:redundant}
On an exact Courant algebroid, a Ricci flat generalized metric of signature $(1,5)$ which is self-dual satisfies the generalized vacuum Einstein equations.
\end{prop}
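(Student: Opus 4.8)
The plan is to show that the two hypotheses---generalized Ricci-flatness $GR=0$ and self-duality $\star H = H$---force the generalized scalar curvature $GS$ to vanish; together with $GR=0$ this is precisely the system \eqref{eq:generalizedvacuum}, so there is nothing else to prove. The whole argument is a short computation feeding the explicit formulae \eqref{eq:GS} and Proposition \ref{prop:Ricci}, combined with the elementary fact that a self-dual three-form in six Lorentzian dimensions is null.

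First I would unpack $GR=0$ in the metric splitting $E\cong TM\oplus T^*M$ determined by the admissible metric $V_+$. By Proposition \ref{prop:Ricci}, the generalized Ricci tensor is $GR(b_-,c_+)=\mathrm{Ric}^+(u,w)$ with $\mathrm{Ric}^+ = \mathrm{Ric}^g - \tfrac14 H\circ H - \tfrac12 d^*H$. Since $\mathrm{Ric}^g$ and $H\circ H$ are symmetric while $d^*H$ is a $2$-form, the vanishing of $\mathrm{Ric}^+$ as a $(0,2)$-tensor is equivalent to the two equations $\mathrm{Ric}^g = \tfrac14 H\circ H$ and $d^*H=0$; only the first is needed below. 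Taking the $g$-trace and using $(H\circ H)_{ij}=H_{ikl}H_j^{\ kl}$, which gives $\mathrm{tr}_g(H\circ H)=H_{ijk}H^{ijk}=6|H|^2$, one obtains $S^g=\tfrac32|H|^2$. Substituting into $4GS = S^g - \tfrac12|H|^2$ from \eqref{eq:GS} then yields $4GS = |H|^2$.

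The remaining step is to show that self-duality forces $|H|^2=0$. Since $G$ is self-dual, in the metric splitting $\star H = H$, and by the identification $\star H = *H$ on three-forms recorded just before the definition of self-dual $G$, this is exactly $*H=H$. Now the standard pointwise identity $H\wedge *H = |H|^2\,\mathrm{vol}_g$ gives $|H|^2\,\mathrm{vol}_g = H\wedge H$; but $H$ has odd degree, so $H\wedge H = (-1)^{9}H\wedge H = -H\wedge H$, hence $H\wedge H=0$ and $|H|^2=0$ pointwise. Therefore $GS=0$, which together with $GR=0$ establishes \eqref{eq:generalizedvacuum}.

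There is no serious obstacle here: the statement hinges entirely on the observation that a self-dual three-form in six dimensions is automatically null, $|H|^2=0$, for degree reasons, and everything else is the trace bookkeeping above. The only points requiring mild care are the normalization conventions ($|H|^2=\tfrac16 H_{ijk}H^{ijk}$ versus $\langle H,H\rangle$, and the factor $6$ relating $\mathrm{tr}_g(H\circ H)$ to $|H|^2$), and the fact that the signature $(1,5)$ is what makes $\star^2=1$ on three-forms, so that the condition $\star H = H$ is non-vacuous in the first place.
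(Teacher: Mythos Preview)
Your proof is correct and follows essentially the same route as the paper: extract $\mathrm{Ric}^g=\tfrac14 H\circ H$ from $GR=0$ via Proposition~\ref{prop:Ricci}, take the $g$-trace, then use self-duality and the odd-degree identity $H\wedge H=0$ to force $|H|^2=0$ and hence $GS=0$. Your version is in fact slightly more explicit than the paper's, both in separating the symmetric and skew parts of $\mathrm{Ric}^+$ and in tracking the trace normalization $\mathrm{tr}_g(H\circ H)=6|H|^2$.
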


\begin{proof}
From Proposition \ref{prop:Ricci} and \eqref{eq:GS} and the definition of self-duality, by hypothesis the generalized metric $G$ satisfies
\begin{eqnarray}
\mathrm{Ric}^{g} &=& \frac{1}{4} H \circ H\nonumber,\qquad d\ast H = 0, \qquad H^- = 0
\end{eqnarray}
for the Lorentzian metric $g$ and the closed three-form $H$ defined by $G$ (see Proposition \ref{prop:ghfromVandback}). Taking trace in the first equation we obtain 
$$
S_g - \frac{1}{4}|H|^2 = 0,
$$
and using the self-duality condition
$$
|H|^2 \vol_g = H \wedge *H = H \wedge H = 0,
$$
it follows that $\mathrm{GS} = 0$.
\end{proof}

\subsection{The generalized-geometry description of six-dimensional $\mathcal{N}=1$ pure Supergravity} 
\label{sec:6dsugraGG}

%%%%%%%%%%%%%%%%%%%%%%%%%%%%%%%%%%%%%%%%%%%%%%%%%%%%%%%%%%%%%%%%%%%%%%%%%%%%%%%%%%%%%%%%%%%%%%%%%%%%%%%%%%%%%%%%%%%%%%%%%%%%
%%%%%%%%%%%%%%%%%%%%%%%%%%%%%%%%%%%%%%%%%%%%%%%%%%%%%%%%%%%%%%%%%%%%%%%%%%%%%%%%%%%%%%%%%%%%%%%%%%%%%%%%%%%%%%%%%%%%%%%%%%%%

In this section we will obtain the equations of motion \eqref{eq:equationsofmotion} of six-dimensional $\mathcal{N}=1$ pure Supergravity as the vanishing of the generalized Ricci tensor and scalar curvature of an admissible self-dual generalized metric. 

Consider an exact Courant algebroid $(E,\la\cdot,\cdot\ra,[\cdot,\cdot],\pi)$ over a six-dimensional oriented spin manifold $M$. We note that $E$ is determined up to isomorphism by the choice of a class $\tau \in H^3(M,\RR)$, which amounts to fix the class of the closed Supergravity three-form. 

\begin{thm}
\label{thm:equationsgeneralized}
A solution $(g,H)$ of the equations of motion \eqref{eq:equationsofmotion} of pure $\mathcal{N}=1$ six-dimensional Supergravity with $[H] = \tau$ is equivalent to a self-dual generalized $SO(1,5) \times SO(5,1)$-metric on $E$ satisfying the generalized vacuum Einstein equations: 
\begin{equation}
\label{eq:generalizedequations}
\mathrm{GRic} = 0\, , \qquad \mathrm{GS} = 0\, .
\end{equation}
\end{thm}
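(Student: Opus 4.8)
The statement is a repackaging of the curvature formulae of Section~\ref{sec:generalizedcurvature} together with Proposition~\ref{prop:ghfromVandback}, so the plan is to make the dictionary between the two sides explicit and to check that the equations match in both directions.

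\emph{From Supergravity to generalized geometry.} Given a solution $(g,H)$ of \eqref{eq:equationsofmotion} with $[H]=\tau$, I would first note that, since $E$ has \u{S}evera class $\tau$, there is an isotropic splitting of $E$ whose three-form curvature is exactly $H$: fixing any splitting with curvature $H_{0}$, the equality $[H_{0}]=\tau=[H]$ in $H^{3}(M,\RR)$ gives $H-H_{0}=db$ for some $b\in\Omega^{2}(M)$, and the $b$-shifted splitting has curvature $H$. In this splitting put $V_{+}=\{v+g(v)\,|\,v\in TM\}$; by Proposition~\ref{prop:ghfromVandback} this is an admissible generalized metric with associated metric $g$ and three-form curvature $H$, and since $g$ is Lorentzian it has signature $(1,5)$, i.e. it is an admissible $SO(1,5)\times SO(5,1)$-metric. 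The equation $H^{-}=0$ in \eqref{eq:equationsofmotion} says $\ast H=H$, and by the comparison between $\star$ and $\ast$ on three-forms recorded just before the definition of a self-dual generalized metric this is exactly $\star H=H$; hence $G$ is self-dual. By Proposition~\ref{prop:Ricci} the generalized Ricci tensor is $\mathrm{GRic}=\mathrm{Ric}^{+}=\mathrm{Ric}^{g}-\frac{1}{4} H\circ H-\frac{1}{2} d^{\ast}H$; the first equation of \eqref{eq:equationsofmotion} kills the symmetric part $\mathrm{Ric}^{g}-\frac{1}{4} H\circ H$, while $d\ast H=0$ is equivalent to $d^{\ast}H=0$ on a six-manifold, so $\mathrm{GRic}=0$. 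Since $G$ is self-dual and Ricci-flat, Proposition~\ref{prop:redundant} gives $\mathrm{GS}=0$, and hence $G$ satisfies the generalized vacuum Einstein equations \eqref{eq:generalizedequations}.

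\emph{The converse.} Starting from a self-dual admissible $SO(1,5)\times SO(5,1)$-metric $G$ on $E$ with $\mathrm{GRic}=0$ and $\mathrm{GS}=0$, Proposition~\ref{prop:ghfromVandback} produces a Lorentzian metric $g$ of signature $(1,5)$ on $M$ together with an isotropic splitting, and hence a closed three-form $H$, whose cohomology class is the \u{S}evera class $\tau$ of $E$. Self-duality gives $\star H=H$, hence $H^{-}=0$. Using the identification $V_{\pm}\cong TM$, the vanishing $\mathrm{GRic}=\mathrm{Ric}^{+}=0$ together with the decomposition $\mathrm{Ric}^{+}=\left(\mathrm{Ric}^{g}-\frac{1}{4} H\circ H\right)-\frac{1}{2} d^{\ast}H$ into its symmetric part and its skew part (note $d^{\ast}H\in\Omega^{2}(M)$) forces $\mathrm{Ric}^{g}=\frac{1}{4} H\circ H$ and $d^{\ast}H=0$, the latter being equivalent to $d\ast H=0$. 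Thus $(g,H)$ solves \eqref{eq:equationsofmotion} with $[H]=\tau$, and the two constructions are mutually inverse by Proposition~\ref{prop:ghfromVandback}.

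\emph{Main obstacle.} There is no genuine analytic content here; the two points requiring care are the identification, in the splitting determined by $V_{+}$, of the generalized Hodge star with the classical Hodge star on three-forms---so that the generalized self-duality condition $\star H=H$ becomes the physical condition $H^{-}=0$---and the splitting of $\mathrm{Ric}^{+}$ into symmetric and skew parts, which is precisely what lets one recover the two separate equations $\mathrm{Ric}^{g}=\frac{1}{4} H\circ H$ and $d\ast H=0$ from the single equation $\mathrm{GRic}=0$. Both are already settled by the normalizations fixed in Section~\ref{sec:generalizedcurvature} and the discussion preceding the definition of a self-dual generalized metric, and the redundancy of $\mathrm{GS}=0$ is exactly the content of Proposition~\ref{prop:redundant}.
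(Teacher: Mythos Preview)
Your argument is correct and follows the same route as the paper: both directions rest on Proposition~\ref{prop:ghfromVandback}, Proposition~\ref{prop:Ricci}, formula~\eqref{eq:GS}, the identification $\star=\ast$ on three-forms in the metric splitting, and Proposition~\ref{prop:redundant} for the redundancy of $\mathrm{GS}=0$. Your write-up is in fact more explicit than the paper's on two points it leaves implicit---the choice of splitting of $E$ realizing $H$ as the curvature three-form, and the symmetric/skew decomposition of $\mathrm{Ric}^{+}$ that separates $\mathrm{GRic}=0$ into $\mathrm{Ric}^{g}=\tfrac{1}{4}H\circ H$ and $d^{\ast}H=0$---but the underlying logic is identical.
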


\begin{proof}
From Proposition \ref{prop:Ricci} and \eqref{eq:GS}, we deduce that the system \eqref{eq:generalizedequations} is equivalent to:
\begin{eqnarray}
\label{eq:equationsofmotionII}
\mathrm{Ric}^{g} &=& \frac{1}{4} H \circ H\nonumber,\qquad d\ast H = 0, \qquad S_g - \frac{1}{2}|H|^2 = 0,
\end{eqnarray}
for the Lorentzian metric $g$ and the closed three-form $H$ defined by the generalized metric $G$ on $E$ (see Proposition \ref{prop:ghfromVandback}). Note that the first and third equation are equivalent to
$$
\mathrm{Ric}^{g} = \frac{1}{4} H \circ H, \qquad |H|^2 \vol_g = H \wedge *H = 0
$$
%Equation $|H|^2=0$ states that the norm $H$ must be zero, where $|\cdot |$ is the norm on three-forms induced by $g$, which, as explained in lemma \ref{lemma:signaturethreeforms}, is of signature $(10,10)$. Given a metric $g$, the system of equations
Now, the self-duality condition on $G$ implies $H = *H$ (this condition being stronger than $|H|^2 = 0$) and hence $(g,H)$ is a solution of \eqref{eq:equationsofmotion} with $[H] = \tau$. The converse follows easily from Proposition \ref{prop:redundant}.
\end{proof}

The supersymmetry variations of six-dimensional pure, $\mathcal{N}=1$ Supergravity can be also geometrized using the canonical Levi-Civita connection $D^{LC}$. For this, given an admissible generalized metric of signature $(1,5)$, we identify the positive chirality spinor bundle $S_+$ for the induced Lorentz metric $g$ on $M$, with
$$
S_+ \cong S_+(V_+).
$$
Then, the supersymmetry variations can be written using the operator induced by
$$
D_- \colon S_+(V_+) \to S_+(V_+) \otimes V_-^*
$$
on $\mathsf{S}_+$ simply as
$$
D_- \epsilon = 0.
$$

Theorem \ref{thm:equationsgeneralized} implies that the solutions of six-dimensional pure, $\mathcal{N}=1$ Supergravity can be naturally regarded as gravitational monopoles in generalized geometry. %We note that for a Ricci-flat, self-dual generalized metric $G$, the condition
%$$
%SG = 0
%$$
%is always satisfied, as a result of that $H=*H$ implies $|H|^2 = 0$. 
A generalized gravitational monopole, where we drop the self-dual assumption, satisfies the weaker system:

\begin{eqnarray}
\label{eq:systemnull}
\mathrm{Ric}^{g} &=& \frac{1}{4} H \circ H\nonumber,\qquad d\ast H = 0, \qquad |H|^2 = 0.
\end{eqnarray}

\noindent
%for the Lorentz metric $g$ and the closed three-form $H$. 
Therefore, the generalized vacuum Einstein equations provide a relaxed, geometric condition, for the Lorentzian metric $g$ and the three-form $H$. %It is interesting to understand the relation of equations \eqref{eq:systemnull} to the equations of motion of pure, $\mathcal{N}=1$ six-dimensional Supergravity. First of all, we shall notice that 
To understand the last equation

\begin{equation}
\label{eq:nullH}
|H|^2 = 0\, ,
\end{equation}

\noindent
%admits non-trivial solutions. In fact, 
we note that $|\cdot|^2$ is a metric, induced by $g$, of split signature $(10,10)$ on $\Lambda^3T^*M$ and therefore maximal vector spaces solving \eqref{eq:nullH} correspond to Lagrangian subbundles $L\subset\Lambda^3T^*M$. Given a Lagrangian $L$, one can find a solution of \eqref{eq:nullH} by prescribing a section of $L$. In particular, the space of (anti)self-dual three-forms $L_+$ provides a natural, canonical, choice of Lagrangian subspace which indeed correspond to the Supergravity case. According to \cite[III. 1.9.]{Chevalley} (see also \cite[Prop. 1.15]{2004math......1221G}), a general Lagrangian can always be determined in terms of $SO(10,10)$-spinors at any point as follows: it is given by a choice of $k$ self-dual $3$-forms $\theta_1, \ldots, \theta_k \in L_+$, with $0 \leq k \leq 10$, and an element
$$
\tau \in \Lambda^2 L_+.
$$
The Lagrangian determined by these data is then given by
$$
L = \{h \in \Lambda^3 T^*: e^\tau \wedge \theta_1 \wedge \ldots \wedge \theta_k \cdot h = 0\},
$$
where $\cdot$ denotes Clifford multiplication and $\wedge$ denotes the wedge product on the exterior algebra of $L_+$. For a non-self dual three-form, satisfying the null condition \eqref{eq:nullH}, the pair of three-forms $\{H,*H\}$ span an isotropic subspace (contained in a maximal Lagrangian subspace) inside $\Lambda^3T^*M$, which cuts non-trivially the Lagrangian of (anti)self-dual three-forms. It would be interesting to study further what are the theories that correspond to the relaxed geometric condition \eqref{eq:generalizedequations}, for a non-self dual metric, and if they admit some sort of supersymmetrization. At this respect, we note that choosing self-dual or anti self-dual three-forms is a matter of convention in the corresponding pure, $\mathcal{N}=1$ Supergravity theory, and condition $|H|^2 = 0$ is compatible with either choice.

%%%%%%%%%%%%%%%%%%%%%%%%%%%%%%%%%%%%%%%%%%%%%%%%%%%%%%%%%%%%%%%%%%%%%%%%%%%%%%%%%%%%%%%%%%%%%%%%%%%%%%%%%%%%%%%%%%%%%%%%%%%%
%%%%%%%%%%%%%%%%%%%%%%%%%%%%%%%%%%%%%%%%%%%%%%%%%%%%%%%%%%%%%%%%%%%%%%%%%%%%%%%%%%%%%%%%%%%%%%%%%%%%%%%%%%%%%%%%%%%%%%%%%%%%

%\subsection{Adding the tensor multiplet}

%%%%%%%%%%%%%%%%%%%%%%%%%%%%%%%%%%%%%%%%%%%%%%%%%%%%%%%%%%%%%%%%%%%%%%%%%%%%%%%%%%%%%%%%%%%%%%%%%%%%%%%%%%%%%%%%%%%%%%%%%%%%
%%%%%%%%%%%%%%%%%%%%%%%%%%%%%%%%%%%%%%%%%%%%%%%%%%%%%%%%%%%%%%%%%%%%%%%%%%%%%%%%%%%%%%%%%%%%%%%%%%%%%%%%%%%%%%%%%%%%%%%%%%%%

The geometric condition \eqref{eq:generalizedequations} can be obtained or motivated in generalized geometry from a different point of view, namely by coupling six-dimensional $\mathcal{N}=1$ pure Supergravity to a tensor multiplet $(e^{\phi},H^-,\psi)$, where $e^{\phi}$ is the six-dimensional dilaton, $H^{-}$ is a closed anti-self-dual three-form and $\psi$ is a suitable spinor. The anti-self-dual three-form $H^{-}$ of the tensor multiplet can be combined with the self-dual three-form $H^{+}$ of the gravitational multiplet $(g,H^{+},\epsilon)$ as to obtain a closed three-form $H$, subject to no constraints. In terms of $(g,H,e^{\phi})$ the theory is formally equal to the NS-NS sector of Type-II Supergravity, the only difference being that we are here in six-dimensions. Then the geometrization of six-dimensional pure, $\mathcal{N}=1$ Supergravity coupled to a tensor multiplet can be performed by means of an exact Courant algebroid along the very same lines of Section 5.1 of reference \cite{2014CMaPh.332...89G}. The corresponding Lagrangian is then:

\begin{equation}
L = \int_M \vol_g e^\phi \mathrm{SG}(D^\phi)\, ,
\end{equation}

\noindent
which implies the equations of motion:

\begin{equation}
\mathrm{GRic}(D^\phi) = 0, \qquad \mathrm{GS}(D^\phi) = 0\, .
\end{equation}

\noindent
Here, $D^\phi$ is canonical choice of torsion-free metric connection, obtained from $D^{LC}$ by adding a \emph{Weyl term} determined by $d\phi$. For constant dilaton, $D^\phi = D^{LC}$ and the equations of motion are are simply given by

\begin{equation}
\mathrm{GRic} = 0, \qquad \mathrm{GS} = 0\, ,
\end{equation}

\noindent
which recovers the relaxed condition proposed for pure Supergravity (dropping the self-duality assumption).

%%%%%%%%%%%%%%%%%%%%%%%%%%%%%%%%%%%%%%%%%%%%%%%%%%%%%%%%%%%%%%%%%%%%%%%%%%%%%%%%%%%%%%%%%%%%%%%%%%%%%%%%%%%%%%%%%%%%%%%%%%%%
%%%%%%%%%%%%%%%%%%%%%%%%%%%%%%%%%%%%%%%%%%%%%%%%%%%%%%%%%%%%%%%%%%%%%%%%%%%%%%%%%%%%%%%%%%%%%%%%%%%%%%%%%%%%%%%%%%%%%%%%%%%%
%%%%%%%%%%%%%%%%%%%%%%%%%%%%%%%%%%%%%%%%%%%%%%%%%%%%%%%%%%%%%%%%%%%%%%%%%%%%%%%%%%%%%%%%%%%%%%%%%%%%%%%%%%%%%%%%%%%%%%%%%%%%
%%%%%%%%%%%%%%%%%%%%%%%%%%%%%%%%%%%%%%%%%%%%%%%%%%%%%%%%%%%%%%%%%%%%%%%%%%%%%%%%%%%%%%%%%%%%%%%%%%%%%%%%%%%%%%%%%%%%%%%%%%%%

\section{Supersymmetric solutions and generalized metrics}
\label{sec:susysolutions}

%%%%%%%%%%%%%%%%%%%%%%%%%%%%%%%%%%%%%%%%%%%%%%%%%%%%%%%%%%%%%%%%%%%%%%%%%%%%%%%%%%%%%%%%%%%%%%%%%%%%%%%%%%%%%%%%%%%%%%%%%%%%
%%%%%%%%%%%%%%%%%%%%%%%%%%%%%%%%%%%%%%%%%%%%%%%%%%%%%%%%%%%%%%%%%%%%%%%%%%%%%%%%%%%%%%%%%%%%%%%%%%%%%%%%%%%%%%%%%%%%%%%%%%%%
%%%%%%%%%%%%%%%%%%%%%%%%%%%%%%%%%%%%%%%%%%%%%%%%%%%%%%%%%%%%%%%%%%%%%%%%%%%%%%%%%%%%%%%%%%%%%%%%%%%%%%%%%%%%%%%%%%%%%%%%%%%%
%%%%%%%%%%%%%%%%%%%%%%%%%%%%%%%%%%%%%%%%%%%%%%%%%%%%%%%%%%%%%%%%%%%%%%%%%%%%%%%%%%%%%%%%%%%%%%%%%%%%%%%%%%%%%%%%%%%%%%%%%%%%

In this section we analyze the generalized metric corresponding to a specific supersymmetric, singular, solution of six-dimensional pure $\mathcal{N}=1$ Supergravity: the self-dual string. This is a particular example of a special class of solutions, dubbed in reference \cite{Gutowski:2003rg} as \emph{$u$-independent non-twisting}.

\subsection{General supersymmetric solutions}

The general characterization of the supersymmetric solutions of pure $\mathcal{N}=1$ six-dimensional Supergravity was obtained in reference \cite{Gutowski:2003rg} using spinorial techniques pioneered in reference \cite{Tod:1983pm} by K. P. Tod. Let us briefly review the procedure as to obtain the particular solution that we will consider in this letter. The space-time $M$ of a bosonic supersymmetric solution is assumed to be equipped with a globally defined and nowhere vanishing $\epsilon\in\Gamma\left(\mathsf{S}_{+}\right)$ generating the supersymmetry transformation and satisfying the Killing spinor equation:

\begin{equation}
\nabla^{+}\epsilon = 0\, .
\end{equation}

\noindent
The presence of such spinor implies for starters a reduction of the spin bundle from $Spin(1,5)$ to $SU(2)\ltimes\mathbb{R}^{4}$ \cite{2000math......4073B}. Let us denote by $(M,g,H)$ a supersymmetric bosonic solution admitting at least one Killing spinor. Then $M$ is equipped with a globally defined vector field $v$ and a triplet of two-forms $I^{i},\, i=1,2,3,$ such that

\begin{equation}
\mathcal{L}_{v} g = 0\, , \qquad \mathcal{L}_{v} H =0\, , \qquad v\wedge \mathcal{L}_{v} I^{i} = 0\, ,
\end{equation} 

\noindent
and

\begin{equation}
I^{i}\, I^{j} = \epsilon^{ijk}\, I^{k} - \delta^{ij}\, ,\quad \, i,j,k=1,2,3.
\end{equation}

\noindent
Therefore $v$ leaves invariant the solution and the triplet $I^{i},\, i=1,2,3,$ satisfies the algebra of the imaginary unit quaternions. 

\begin{remark}
Usually, supersymmetric solutions of a given Supergravity are classified in two different classes, time-like or null, according to the norm of $v$. In the case of six-dimensional $\mathcal{N}=1$ pure Supergravity $v$ is always null.
\end{remark}
% so we only find the null class of supersymmetric solutions. We are interested in finding the generalized metric corresponding to singular supersymmetric solutions of six-dimensional $\mathcal{N}=1$ pure Supergravity. For simplicity, we will focus on a special class of solutions, dubbed in reference \cite{Gutowski:2003rg} as \emph{$u$-independent non-twisting}. 

Let $(M,g,H)$ be a $u$-independent non-twisting bosonic supersymmetric solution, as defined in \cite{Gutowski:2003rg}. By definition, this class of solutions satisfy:

\begin{equation}
\label{eq:nontwisting}
v^{\flat}\wedge dv^{\flat} = 0\, ,
\end{equation}

\noindent
or, in other words, the codimension-one distribution orthogonal to $v$ is integrable. %Let $(M,g,H)$ be a $u$-independent non-twisting bosonic supersymmetric solution. 
Then, it is shown in \cite{Gutowski:2003rg} that one can find local coordinates $(x^{u},x^{v},x^{m}),\, m =1,\hdots,4$ in an open set $U\subset M$ such that the metric and the three-form $H$ are given by

\begin{eqnarray}
\label{eq:nontwistingsolution}
g &=& \frac{1}{h}\left( dx^{u}\otimes dx^{v} + dx^{v}\otimes dx^{u}\right) - h\, g^{4}_{mn} dx^{m}\otimes dx^{n}\nonumber\, ,\\
H &=& \ast_{4}d_{4}h - \frac{1}{h^2} dx^u\wedge dx^v\wedge d_{4}h\, .
\end{eqnarray}

\noindent
Here $g_{4} = g^{4}_{mn} dx^{m}\otimes dx^{n}$ denotes a Riemannian metric on a four-dimensional manifold $\mathcal{B}$, the so-called (local) base space of the solution, %the super-script $4$ denotes that the corresponding symbol is defined on $\mathcal{B}$ 
and $h\in C^{\infty}(U)$ is a function satisfying

\begin{equation}
\Delta_{4} h = 0\, ,\qquad \partial_{x^{u}} h = \partial_{x^{v}} h = 0\, .
\end{equation}

\noindent
In addition, we have that:

\begin{equation}
J^{i} = h^{-1} I^{i}\, , \qquad d_{4}(g_{4}J^{i}) = 0\, ,
\end{equation}

\noindent
and hence $(\mathcal{B},g_{4},J^{i}),\, i=1,2,3,$ is a four-dimensional hyper-K\"ahler manifold. We stress the fact that, given such a solution, the hyper-K\"ahler manifold $\mathcal{B}$ may be defined only locally.

%%%%%%%%%%%%%%%%%%%%%%%%%%%%%%%%%%%%%%%%%%%%%%%%%%%%%%%%%%%%%%%%%%%%%%%%%%%%%%%%%%%%%%%%%%%%%%%%%%%%%%%%%%%%%%%%%%%%%%%%%%%%
%%%%%%%%%%%%%%%%%%%%%%%%%%%%%%%%%%%%%%%%%%%%%%%%%%%%%%%%%%%%%%%%%%%%%%%%%%%%%%%%%%%%%%%%%%%%%%%%%%%%%%%%%%%%%%%%%%%%%%%%%%%%

\subsection{The self-dual D1-D5 string}
\label{sec:susymetricstring}

%%%%%%%%%%%%%%%%%%%%%%%%%%%%%%%%%%%%%%%%%%%%%%%%%%%%%%%%%%%%%%%%%%%%%%%%%%%%%%%%%%%%%%%%%%%%%%%%%%%%%%%%%%%%%%%%%%%%%%%%%%%%
%%%%%%%%%%%%%%%%%%%%%%%%%%%%%%%%%%%%%%%%%%%%%%%%%%%%%%%%%%%%%%%%%%%%%%%%%%%%%%%%%%%%%%%%%%%%%%%%%%%%%%%%%%%%%%%%%%%%%%%%%%%%

By the previous discussion, in order to find a $u$-independent non-twisted supersymmetric solution it suffices to specify

\begin{itemize}

\item A four-dimensional hyper-Kahler manifold $(\mathcal{B},g_{4},J^{i}),\, i=1,2,3$.

\item  A harmonic positive function $h$ in $(\mathcal{B},g_{4})$.

\end{itemize}

\noindent
The solution is then given by $\mathbb{R}^{2}\times\mathcal{B}$, with metric given by substituting $g_{4}$ and $h$ in equation \eqref{eq:nontwistingsolution}.
%\noindent
%Consequently, in order to find an explicit solution, we need to use an explicit four-dimensional Hyper-Kahler manifold. Once we have chosen a particular Hyper-K\"ahler metric, solving the laplacian may be a really hard problem by itself. 
A particular solution is always given by to taking $h=1$ in any hyper-K\"ahler manifold. %In that case $\mathcal{M}_{6}$ is %locally 
%isometric to $\mathbb{R}^{2}\times\mathcal{B}$ with metric $g_{1,1}\times g_{4}$, where $g_{1,1}$ is the two-dimensional Minkowski metric.

In this section we choose the simplest Hyper-K\"ahler manifold, namely we will assume that $\mathcal{B}$ is %locally 
isometric to an open subset of $\mathbb{R}^{4}$ equipped with the euclidean flat metric. With this choice, we obtain the self-dual static string solution, whose maximally analitic extension is described in \cite{Gutowski:2003rg}. The harmonic function $h$ is then locally given by:

\begin{equation}
h = a + \frac{b}{\rho^2}\, , \qquad \rho^2 = |x|^2\, ,
\end{equation}

\noindent
where $a$ and $b$ are real constants. For simplicity we will normalize $h$ as follows:

\begin{equation}
h = 1 \pm \frac{1}{\rho^2}\, ,
\end{equation}

\noindent
where the sign $\pm$ can be chose at will. From the String Theory point of view, this solution corresponds to a D1-D5 system with constant dilaton and equal D1-D5 harmonic function, which is the harmonic function we denote by $h$. Depending on the sign, we have two possibilities, namely, the black string solution and the string solution with a naked singularity:

\begin{itemize}

\item {\bf The black string}: in this case the string-like singularity is covered by a horizon. In a local patch that covers the region from the horizon of the string to the asymptotically flat infinity, using isotropic coordinates $(t,\rho, x^{m})$ the base space is
$$
\mathcal{B}_{out} = \{x \in \mathbb{R}^4: \rho(x) > 0\}
$$ 
and we can write the solution as follows:

\begin{eqnarray}\label{eq:exterior}
g &=& \frac{1}{h}\left( dx^{u}\otimes dx^{v} + dx^{v}\otimes dx^{u}\right) - h\, \delta_{mn} dx^{m}\otimes dx^{n}\nonumber\, ,\quad h = 1 + \frac{1}{\rho^2}\, ,\\
H &=& \ast_{4}d_{4}h - \frac{1}{h^2} dx^u\wedge dx^v\wedge d_{4}h\, .
\end{eqnarray}

\noindent
%where $\rho\in (0,\infty)$. 
The horizon lies at $\rho\to 0^{+}$ whereas the asymptotic infinity is at $\rho\to \infty$. On the other hand, the interior solution, namely the solution written in a local patch that covers from the singularity to the horizon has base space 
$$
\mathcal{B}_{in} = \{x \in \mathbb{R}^4: 0 < \rho(x) < 1\}
$$ 
and can be written as follows: %\footnote{The coordinate $\rho$ is now of course different, but for simplicity we denote it by the same symbol.}:

\begin{eqnarray}\label{eq:interior}
g &=& \frac{1}{h}\left( dx^{u}\otimes dx^{v} + dx^{v}\otimes dx^{u}\right) - h\, \delta_{mn}dx^{m}\otimes dx^{n}\nonumber\, ,\quad h = 1 - \frac{1}{\rho^2}\, ,\\
H &=& \ast_{4}d_{4}h - \frac{1}{h^2} dx^u\wedge dx^v\wedge d_{4}h\, .
\end{eqnarray}

\noindent
%where now $\rho\in (0,1)$. 
Note that \eqref{eq:interior} is the exact same expression as \eqref{eq:exterior}, for a different choice of harmonic function $h$. The horizon lies again at $\rho\to 0^{+}$ whereas the singularity is at $\rho\to 1^{-}$. There exist also smooth coordinates in a neighborhood of the horizon as proven in \cite{Gutowski:2003rg} (therefore, the metric extends smoothly to the horizon, which is not a singular hypersurface). Notice that there is a change of signature for the metric, from signature $(1,5)$ outside the horizon to signature $(5,1)$ inside the horizon, as it is standard for this type of solutions.

\item {\bf The naked string}: the naked string does not have a horizon and can be globally written as \eqref{eq:interior}
%
%\begin{eqnarray}
%g &=& \frac{1}{h}\left( dx^{u}\otimes dx^{v} + dx^{v}\otimes dx^{u}\right) - h\, \delta_{mn} dx^{m}\otimes dx^{n}
%\nonumber\, ,\quad h = 1 - \frac{1}{\rho^2}\, ,\\
%H &=& \ast_{4}d_{4}h - \frac{1}{h^2} dx^u\wedge dx^v\wedge d_{4}h\, .
%\end{eqnarray}
with base space
$$
\mathcal{B}_{nk} = \{x \in \mathbb{R}^4: \rho(x) > 1\}.
$$ 

\noindent
%where $\rho\in (1,\infty)$. 
The singularity is at $\rho\to 1^{+}$ and the asymptotic infinity is at $\rho\to \infty$. Since there is no horizon there is no change in the signature of the metric, which remains everywhere of signature $(1,5)$.

\end{itemize}

We analyze now the generalized metric correponding to the interior region of the black string. By constructing a exact Courant algebroid over the singularity, we will show that, as a generalized metric, the black string solution admits a smooth degenerate transition to the naked string. We construct a exact Courant algebroid $E$ over
$$
\mathbb{R}^2 \times (\mathbb{R}^4 \backslash \{0\})
$$
as follows: on $U_{in} = \mathbb{R}^2 \times \mathcal{B}_{in}$ it is given by the $H$-twisted exact Courant algebroid $TU_{in} \oplus T^*U_{in}$ with twisting three-form
$$
H_{in} = \ast_{4}d_{4}h - \frac{1}{h^2} dx^u\wedge dx^v\wedge d_{4}h, \qquad h = 1 - \frac{1}{\rho^2}
$$
Similarly, on $U_{nk} = \mathbb{R}^2 \times \mathcal{B}_{nk}$ it is given by the $H$-twisted exact Courant algebroid $TU_{nk} \oplus T^*U_{nk}$ with twisting three-form $H_{nk}$ given by the same expression. Let $V \subset \mathbb{R}^2 \times (\mathbb{R}^4 \backslash \{0\})$ be a small neighborhood of the singularity 
$$
\mathbb{R}^2 \times S^3 \subset \mathbb{R}^2 \times (\mathbb{R}^4 \backslash \{0\}).
$$
Over $V$ we consider the $H$-twisted exact Courant algebroid $TV \oplus T^*V$ with twisting three-form 
$$
H_V = \ast_{4}d_{4}h.
$$
The three local Courant algebroids glue over the intersetions $V \cap U_{in}$ and $V \cap U_{nk}$, providing a well-defined exact Courant algebroid over $\mathbb{R}^2 \times (\mathbb{R}^4 \backslash \{0\})$. Explicitely, the gluing is provided by
$$
v + \xi \in TV \oplus T^*V \to e^{-b}(v + \xi) \in TU_{in/nk} \oplus T^*U_{in/nk}
$$
where the $B$-field transformation $e^b$ is given by the two-form
$$
b = h^{-1}g_{1,1}A = h^{-1} \(dx^u \otimes dx^v - dx^v \otimes dx^u\) = h^{-1} dx^u \wedge dx^v
$$
and $A \colon T\RR^2 \to T\RR^2$, satisfying $A^2 = \mathbb{I}_{T\RR^2}$, is the standard paracomplex structure on $\RR^2$.
$$
A = \left(
\begin{array}{cc}
1 & 0\\
0 & -1
\end{array}\right).
$$
To see this, we note simply that the trasformed twisting three-form is
$$
H' = H_{in/nk} - db = \ast_{4}d_{4}h - \frac{1}{h^2} dx^u\wedge dx^v\wedge d_{4}h + \frac{1}{h^2} dx^u\wedge dx^v\wedge d_{4}h = \ast_{4}d_{4}h\, .
$$
With the exact Courant algebroid $E$ at hand, we analyze now the form of the generalized metric provided by the black string solution. On the local patch $U_{in} = \mathbb{R}^2 \times \mathcal{B}_{in}$, the corresponding generalized metric is given by
\begin{equation}
G = \left(
\begin{array}{cccc}
0 & 0 & h g_{1,1}^{-1} & 0\\
0 & 0 & 0 & h^{-1}g_4^{-1}\\
h^{-1} g_{1,1} & 0 & 0 & 0\\
0 & h g_4 & 0 & 0
\end{array}\right),
\end{equation}
where $g_4 = \delta_{mn}dx^m \otimes dx^n$. Similarly, we note that solution provided by the naked singularity gives the exact same expression on the patch $U_{nk} = \mathbb{R}^2 \times \mathcal{B}_{nk}$. On the patch $V$ around the singularity, the generalized metric takes now the form
\begin{equation}
e^{b}Ge^{-b} = \left(
\begin{array}{cccc}
A & 0 & h g_{1,1}^{-1} & 0\\
0 & 0 & 0 & h^{-1}g_4^{-1}\\
0 & 0 & A^* & 0\\
0 & h g_4 & 0 & 0
\end{array}\right).
\end{equation}
We note that the generalized metric $V_+ \subset E$ given by the previous expression, has a well-defined smooth limit for any point $p = (x,v)$ at the singularity $\mathbb{R}^2 \times S^3$, given by the subbundle
\[
\mathrm{Ker}_x(A \oplus A^* - \mathbb{I}) \oplus T_v \mathbb{R}^4 \subset \(T_x\mathbb{R}^2 \oplus T^*_x\mathbb{R}^2\) \oplus \(T_v \mathbb{R}^4 \oplus \oplus T^*_v \mathbb{R}^4\) = E_p.
\]
This subbundle corresponds to a smooth degenerate generalized metric on $E_{|\mathbb{R}^2 \times S^3}$, with signature  $(1,1)$ on the $\mathbb{R}^2$-directions. We conclude therefore that there exists a smooth generalized metric on the exact Courant algebroid $E$ over $\mathbb{R}^2 \times (\mathbb{R}^4 \backslash \{0\})$, degenerate over the singularity $\mathbb{R}^2 \times S^3$ of the black string, which interpolates between the black string solution and the naked string, as claimed.

%%%%%%%%%%%%%%%%%%%%%%%%%%%%%%%%%%%%%%%%%%%%%%%%%%%%%%%%%%%%%%%%%%%%%%
%%%%%%%%%%%%%%%%%%%%%%%%%%%%%%%%%%%%%%%%%%%%%%%%%%%%%%%%%%%%%%%%%%%%%%
%%%%%%%%%%%%%%%%%%%%%%%%%%%%%%%%%%%%%%%%%%%%%%%%%%%%%%%%%%%%%%%%%%%%%%
%%%%%%%%%%%%%%%%%%%%%%%%%%%%%%%%%%%%%%%%%%%%%%%%%%%%%%%%%%%%%%%%%%%%%% 

\acknowledgments

C.S.S. would like to thank Iosif Bena, Ruben Minasian and Tom\'as Ort\'in for their interesting comments, and in particular Iosif Bena for the very nice explanations and suggestions. The work of C.S.S. was supported in part by the ERC Starting Grant 259133 -- ObservableString. The work of M.G.F. was partially supported by ICMAT Severo Ochoa project SEV-2011-0087.

%%%%%%%%%%%%%%%%%%%%%%%%%%%%%%%%%%%%%%%%%%%%%%%%%%%%%%%%%%%%%%%%%%%%%%
%%%%%%%%%%%%%%%%%%%%%%%%%%%%%%%%%%%%%%%%%%%%%%%%%%%%%%%%%%%%%%%%%%%%%%
%%%%%%%%%%%%%%%%%%%%%%%%%%%%%%%%%%%%%%%%%%%%%%%%%%%%%%%%%%%%%%%%%%%%%%
%%%%%%%%%%%%%%%%%%%%%%%%%%%%%%%%%%%%%%%%%%%%%%%%%%%%%%%%%%%%%%%%%%%%%% 

\appendix

%%%%%%%%%%%%%%%%%%%%%%%%%%%%%%%%%%%%%%%%%%%%%%%%%%%%%%%%%%%%%%%%%%%%%%
%%%%%%%%%%%%%%%%%%%%%%%%%%%%%%%%%%%%%%%%%%%%%%%%%%%%%%%%%%%%%%%%%%%%%%
%%%%%%%%%%%%%%%%%%%%%%%%%%%%%%%%%%%%%%%%%%%%%%%%%%%%%%%%%%%%%%%%%%%%%%
%%%%%%%%%%%%%%%%%%%%%%%%%%%%%%%%%%%%%%%%%%%%%%%%%%%%%%%%%%%%%%%%%%%%%% 

\section{Linear algebra preliminaries}
\label{app:linearalgebra}

%%%%%%%%%%%%%%%%%%%%%%%%%%%%%%%%%%%%%%%%%%%%%%%%%%%%%%%%%%%%%%%%%%%%%%
%%%%%%%%%%%%%%%%%%%%%%%%%%%%%%%%%%%%%%%%%%%%%%%%%%%%%%%%%%%%%%%%%%%%%%
%%%%%%%%%%%%%%%%%%%%%%%%%%%%%%%%%%%%%%%%%%%%%%%%%%%%%%%%%%%%%%%%%%%%%%
%%%%%%%%%%%%%%%%%%%%%%%%%%%%%%%%%%%%%%%%%%%%%%%%%%%%%%%%%%%%%%%%%%%%%% 

In this section we will review the linear geometry of the fibre of an exact Courant algebroid. Let $(W,\la\cdot,\cdot\ra)$ be a $2d$-dimensional vector space equipped with a metric $\la\cdot,\cdot\ra$ of indefinite signature, and given by a short exact sequence

\begin{equation}
\label{eq:exactvector}
\xymatrix{
0 \ar[r] & V^{\ast} \ar[r]^{\pi^{\ast}} & W \ar[r]^\pi & V \ar[r] & 0.
}
\end{equation}

\noindent
Here, $V$ be a $d$-dimensional vector space and 
\begin{equation}
\pi^{\ast}\colon V^{\ast}\to W,
\end{equation}
is the map given by the dual map $\pi^{\ast}\colon V^{\ast}\to W^{\ast}$ and the isomorphism $W^{\ast} \simeq W$ provided by the metric. We assume that the image of $\pi^*$ is isotropic in $W$, that is, $\langle \pi^* \xi, \pi^*\xi\rangle = 0$ for all $\xi \in V^*$.

%Since \eqref{eq:exactvector} is an exact sequence of vector spaces, we have that
%\begin{equation}
%\Ker\,\pi^{\ast} = \left\{ 0\right\}\, , \qquad \Im\, \pi^{\ast} = \Ker\, \pi\, , \qquad \Im\,\pi = V\, .
%\end{equation}

\noindent
An isotropic splitting of the exact sequence \eqref{eq:exactvector} is an inyective linear map $s\colon V\to W$ such that:

\begin{itemize}

\item $\pi\circ s = \mathbb{I}_{V}$

\item $\la s(v_{1}), s(v_{2})\ra = 0\, , \qquad\forall\,\, v_{1}, v_{2} \in V\, .$

\end{itemize}

\noindent
Notice that $\pi^{\ast}(V^{\ast})\cap s(V) = \left\{0\right\}$ and therefore, % that $s(V)$ is isotropic in $W$. 
since $\dim\,s(V) = d$, the metric $\la\cdot,\cdot\ra$ must be of split signature $(d,d)$. Given an isotropic splitting $s\colon V\to W$ we can write $W$ as follows:

\begin{equation}
W\simeq s(V)\oplus \pi^{\ast}(V^{\ast}) \simeq V\oplus V^{\ast}\, ,
\end{equation}

\noindent
by means of the following isomorphism:

\begin{eqnarray}
\Psi_{s}\colon V\oplus V^{\ast} &\to & W\, ,\nonumber\\
v + \xi &\mapsto & s(v)+ \frac{1}{2}\pi^{\ast}(\xi).\,
\end{eqnarray}

%\noindent
%with inverse given by

%\begin{eqnarray}
%\Psi^{-1}_{s}\colon W &\to & V\oplus V^{\ast} \, ,\nonumber\\
%w &\mapsto & \pi(w)\oplus (w - s\circ\pi(w)) = v \oplus \xi\, .
%\end{eqnarray}

\noindent
The transported metric to $V\oplus V^{\ast}$ is given by

\begin{equation}
\label{eq:transportedmetriclinear}
\la w_{1}, w_{2} \ra = \la v_{1} + \xi_{1}, v_{2} + \xi_{2}\ra = \frac{1}{2} (\iota_{v_{2}}\xi_{1} + \iota_{v_{1}}\xi_{2})\, ,\qquad w_{1}, w_{2} \in W\, ,
\end{equation}

\noindent
where slightly abusing the notation we have denoted by the same symbol the metric on $W$ an the corresponding metric on $V\oplus V^{\ast}$. Let us use a splitting to identify $(W, \la \cdot, \cdot\ra)$ with $V\oplus V^{\ast}$ equipped with the metric $\la\cdot,\cdot\ra$ given by equation \eqref{eq:transportedmetriclinear}. The isometry group of the metric $\la\cdot,\cdot\ra$ and the canonical orientation on $V\oplus V^{\ast}$ is:

\begin{equation}
SO(V\oplus V^{\ast}) \simeq SO(d,d)\, .
\end{equation}  

\noindent
Its Lie algebra is defined as usual by:

\begin{equation}
\mathfrak{so}(V\oplus V^{\ast}) = \Lambda^{2}(V\oplus V^{\ast}) = \mathrm{End}(V)\oplus \Lambda^{2} V\oplus \Lambda^{2} V^{\ast}\, ,
\end{equation}

\noindent
and thus every element $\psi \in \mathfrak{so}(V\oplus V^{\ast})$ can be written in terms of a endomorphism $A\in\mathrm{End}(V) $ of $V$, a two form $b\in \Lambda^{2} V^{\ast}$ and a bivector $\beta\in \Lambda^{2} V$. Given this decomposition, the action of an element $\psi \in\mathfrak{so}(V\oplus V^{\ast})$ on an element $v+\xi\in V\oplus V^{\ast}$ is given by

\begin{equation}
\psi \cdot (v + \xi) = \left( A\cdot v + \beta\cdot\xi\right) + \left(\iota_{v} b -A^{t}\cdot\xi\right)\, .
\end{equation}

\noindent
By exponentiation of the previous action, we obtain the orthogonal, orientation-preserving, symmetries of $V\oplus V^{\ast}$. There are two important particular cases:

\begin{itemize}

\item Exponentiation $e^{b}$ of a $b$-transformation: $e^{b}\cdot (v+  \xi) = v + (\xi+\iota_{v}b)$.

\item Exponentiation $e^{\beta}$ of a $\beta$-transformation: $e^{\beta}\cdot (v + \xi) = (v + \iota_{\xi}\beta) + \xi$.

\end{itemize}

\noindent
The space of isotropic splittings of \eqref{eq:exactvector} is an affine space modelled on $\Lambda^{2} V^{\ast}$. Therefore, any other splitting $s^{\prime}\colon V\to W$ can be written as:

\begin{equation}
s^{\prime} = s + b\, , \qquad b\in \Lambda^{2} V^{\ast}\, .
\end{equation}

\noindent
Let $\Psi_{s^{\prime}}$ be the isomorphism associated to the isotropic splitting $s^{\prime} = s + b$. Then:

\begin{equation}
\Psi_{s^{\prime}} = e^{b}\cdot\Psi_{s}\, .
\end{equation}

% The bibliography will probably be heavily edited during typesetting.
% We'll parse it and, using the arxiv number or the journal data, will
% query inspire, trying to verify the data (this will probalby spot
% eventual typos) and retrive the document DOI and eventual errata.
% We however suggest to always provide author, title and journal data:
% in short all the informations that clearly identify a document.

\renewcommand{\leftmark}{\MakeUppercase{Bibliography}}
\phantomsection
\bibliographystyle{JHEP}
\bibliography{C:/Users/cshabazi/Dropbox/Referencias/References}
\label{biblio}

\end{document}